\newtheorem{example}{Example}[section]
\newtheorem{definition}{Definition}[section]
\newtheorem{theorem}{Theorem}[section]
\newtheorem{lemma}[theorem]{Lemma}
\newtheorem{claim}[theorem]{Claim}
\newtheorem{observation}[theorem]{Observation}
\newcommand{\E}{\mathbb{E}}
\newcommand{\RR}{\mathbb{R}}      
\newcommand{\NN}{\mathbb{N}}      
\newcommand{\allocation}{\bm{X}}
\newcommand{\payment}{\bm{p}}
\newcommand{\outcome}{\langle\allocation, \payment \rangle}
\newcommand{\ICRegret}{\text{IC-Regret}}
\newcommand{\ICEnvy}{\text{IC-Envy}}
\newcommand{\gsp}{\text{gsp}}
\title{Envy, Regret, and Social Welfare Loss}
\author{Riccardo Colini-Baldeschi\thanks{Facebook, Core Data Science, \texttt{rickuz@fb.com}} \and Stefano Leonardi\thanks{Sapienza University of Rome, Department of Computer and Systems Sciences and Facebook, Core Data Science, \texttt{leonardi@diag.uniroma1.it}}\thanks{Partially supported by ERC Advanced Grant  788893 AMDROMA "Algorithmic and Mechanism Design Research in Online Markets" and MIUR PRIN project ALGADIMAR "Algorithms, Games, and Digital Markets"} \and Okke Schrijvers\thanks{Facebook, Core Data Science, \texttt{okke@fb.com}} \and Eric Sodomka\thanks{Facebook, Core Data Science, \texttt{sodomka@fb.com}}}
\date{}
\begin{document}

\maketitle

\begin{abstract}

Incentive compatibility (IC) is one of the most fundamental properties of an auction mechanism, including those used for online advertising. Recent methods \cite{FSS19,LMSV18} show that counterfactual runs of the auction mechanism with different bids can be used to determine whether an auction is IC. In this paper we show that a similar result can be obtained by looking at the advertisers' \textit{envy}, which can be computed with one single execution of the auction. We introduce two metrics to evaluate the incentive-compatibility of an auction: \ICRegret{} and \ICEnvy{}. For position auction environments, we show that for a large class of pricing schemes (which includes e.g. VCG and GSP), $\ICEnvy{} \ge \ICRegret{}$ (and $\ICEnvy{} = \ICRegret{}$ when bids are distinct). We consider non-separable discounts in the Ad Types environment \cite{CMSW19} where we show that for a generalization of GSP also $\ICEnvy{} \ge \ICRegret{}$. Our final theoretical result is that in all these settings $\ICEnvy{}$ be used to bound the loss in social welfare due advertiser misreports.

Finally, we show that \ICEnvy{} is useful as a feature to predict \ICRegret{} in auction environments beyond the ones for which we show theoretical results. In particular, using \ICEnvy{}  yields better results than training models using only price and value features.

%
%
\end{abstract}

\maketitle

\section{Introduction}
\label{introduction}

Over the past decades, online advertising has grown into a huge industry, with many different online publishers offering impression opportunities. 
Auction theory has played a major role in shaping this ecosystem, and many ad auctions strive to be \emph{Incentive Compatible} (IC), which means that an advertiser achieves the best outcome by truthfully reporting their willingness-to-pay. Despite the role that auction theory has played, the resulting systems may not be IC. For example, intermediaries (called Demand Side Platforms or DSPs) first run an auction to determine the best ad among their clients, and then pass this along to a publisher who runs their own auction including bids from other sources. Even when both auctions in isolation are IC, their composition is not. Furthermore, some publishers use past bids to set a reserve price (or minimum bid), and others are moving to a pay-your-bid model entirely (which has strong incentives to misreport willingness-to-pay) \cite{adexchanger2017firstprice, adexchanger2019googlefirstprice}. 

Recently, there have been several works addressing the problem of determining whether an auction is IC based on statistical tests using only inputs and outputs of an (unobserved) auction mechanism \cite{LMSV18,FSS19}. This gives advertisers the power to test whether an auction is IC without having access to the code. Feng et al. \cite{FSS19} proposed to use regret \cite{FS97} as a way to measure ``how far'' an auction is from being IC:
\begin{align}
\ICRegret_i(v_i) = \max_{b_i} \E_{b_{-i}}\left[ u_i(b_i, b_{-i}) - u_i(v_i, b_{-i})\right],\label{eq:ic-regret}
\end{align}
where $v_i$ is the true value of advertiser $i$, $b_i$ the bid of $i$, $b_{-i}$ the bids of other advertisers, and $u_i(\cdot)$ the (expected) utility of $i$. \ICRegret{} captures the difference in utility between bidding truthfully, and the maximum utility achievable. By definition, IC mechanisms have \ICRegret{} $0$, while higher \ICRegret{} indicates a stronger incentive to misreport.

While measuring IC is most naturally a concern for advertisers (who cannot observe the auction mechanism directly), it is also both important and non-trivial for the auctioneer. The auctioneer cares about IC auctions because they admit simple optimal bidding strategies (namely truthful reporting), and simple bidding strategies may in turn lead to lower churn of advertisers.\footnote{Moreover, additional advertisers are better for revenue than being clever about devising a revenue-optimal auction mechanism, see \cite{bulow1994auctions} (and follow-ups).} Additionally, when auctions are not IC, bidders don't truthfully report their value, which may harm social welfare\footnote{While in some cases there are symmetric equilibria in which social welfare is not harmed \cite{V07}, these need not always exist and bidders may not reach equilibrium \cite{daskalakis2009complexity,babichenko2017communication,rubinstein2018inapproximability}.} and thus the quality of the service provided to the advertisers. While important, it may not be straightforward for publishers to guarantee that their auction is IC for many different reasons: complex interaction between different layers in the advertising ecosystem, running-time constraints, bugs in the auction code, and so forth. 

Feng et al. \cite{FSS19} proposed a method to determine \ICRegret{} for publishers (by taking a worst-case perspective over the advertiser value $v_i$). A downside of their method is that it requires many counterfactual evaluations of the auction's outcomes for alternative bids. This means that the auction code needs to be run many times over. While this may be the best thing one can do with only black-box access to the auction mechanism, could we do better if we're using intermediate data from the auction mechanism?

To overcome the practical difficulties to measure \ICRegret{}, we propose to use Envy \cite{foley1967resource} as a proxy for \ICRegret{}, by identifying relevant classes of auction mechanisms where Envy and \ICRegret{} coincide or where \ICRegret{} is upper bounded by Envy. So what is Envy? Instead of comparing the advertiser's utility against her utility for alternative bids, Envy takes a single outcome, and measures to what extend advertisers are happy with the outcome. Ad auctions typically simultaneously sell multiple ad slots with varying click-through rates \cite{EOS07,V07}. Let ${\bf x}$ be an expected allocation vector (where allocation corresponds to the ad being clicked) and $\bf p$ be an expected pricing vector.	Envy of bidder $i$ given the outcome $({\bf x}, {\bf p})$ is:

\begin{align}
	\text{Envy}_i(v_i, {\bf x}, {\bf p}) = \max_j \left(x_j \cdot  v_i - p_j\right) - \left(x_i\cdot  v_i - p_i\right)
\end{align}

Envy is defined with respect to some outcome. However, in an auction, changing ones bid may change the outcome in the auction. Therefore, we define \ICEnvy{} as the Envy experienced in the outcome \emph{when a bidder bids truthfully}. In the following, let ${\bf x}(v_i, b_{-i})$ is the expected allocation of all bidders and ${\bf p}(v_i, b_{-i})$ be the expected payment vector of all bidders, when bidder $i$ bids truthfully and for bids $b_{-i}$ of the remaining bidders. \ICEnvy{} is then:
	\begin{align}
	\ICEnvy(v_i) &= \max_j \left(x_j(v_i, b_{-i}) \cdot  v_i - p_j(v_i, b_{-i})\right) \\ 
				      & \qquad-
				       \left(x_i(v_i, b_{-i}) \cdot  v_i - p_i(v_i, b_{-i})\right)
	\end{align}
In arbitrary auction environments, \ICEnvy{} and \ICRegret{} don't necessarily coincide. There are natural auction environments with envy-free outcomes, that still have positive \ICRegret{} and vice versa.
\begin{example}[$\ICEnvy = 0$, \ICRegret{} is positive]\label{ex:fp}
	Consider a single-item, first-price auction, with two bidders with values $v_1 = \$10$ and $v_2 = \$8$ and assume bidder 1 bids truthfully. The \ICRegret{} for bidder 1 is $\$2-\epsilon$ for arbitrary small $\epsilon$, as the best alternative bid for them is $\$8 + \epsilon$. However, \ICEnvy{} is $\$ 0$ as the only alternative allocation for bidder 1 is to not receive the item.
\end{example}
\begin{example}[$\ICRegret = 0$, \ICEnvy{} is positive]
	Consider the IC auction for a single item with 2 bidders who face different reserve prices\footnote{While non-anonymous reserve prices may seem contrived, they naturally occur e.g. for the revenue-optimal auction on non-i.i.d. bidders \cite{M81}.} $r_1=\$1$ and $r_2=\$5$. With bids $b_1 = b_2 = \$3$, bidder 1 received the good at their reserve price of $\$1$. Bidder 2 has \ICEnvy{} of $\$2$, but there is no counterfactual bid that will give her positive utility (hence $\ICRegret = 0$).
\end{example}

While in general \ICRegret{} and \ICEnvy{} can be quite different, in this paper we show that there are large auction classes for which $\ICRegret \le \ICEnvy$ (and under mild conditions $\ICRegret = \ICEnvy$). Since computing $\ICEnvy$ requires no counterfactual evaluation of the algorithm, it can serve as an efficient certificate that \ICRegret{} is low.\footnote{For an auctioneer who cares about incentive compatibility, false positives (i.e. high \ICEnvy{} but low \ICRegret) are acceptable while false negatives are not. Therefore the inequality goes in the right direction.} In particular, for the position auction environment \cite{EOS07,V07}, we show that a large class of payment rules (which includes VCG and GSP), $\ICEnvy \ge \ICRegret$, and $\ICEnvy = \ICRegret$ when bids are unique (Section~\ref{s:truthful}). We extend these results to the Ad Types setting \cite{CMSW19} (in which ads of different types have different discount curves\footnote{This is common for example with heterogeneous ad types in an ad auction: the probability of an impression ad being seen decays differently than the probability of a video ad being watched, or the probability of a link-click ad being clicked. All discount curves agree on the relative quality of the slots.}) and show that VCG and a generalization of GSP also have $\ICEnvy \ge \ICRegret$ (Section~\ref{s:egsp}). In addition to bouding \ICRegret{} in terms of \ICEnvy{}, we also use \ICEnvy{} to bound the loss in social welfare loss due to misreports (Section~\ref{s:swl}), and finally we show empirically that \ICEnvy{} can be used as a feature in an estimator for \ICRegret{} in auction environments beyond those for which we have theoretical results (Section~\ref{s:feature})).

\subsection{Related Work}


We propose to connect \ICEnvy{} and \ICRegret{} directly by defining a large class of auction mechanisms for which $\ICEnvy= \ICRegret$ (and a larger class where $\ICEnvy\ge \ICRegret$). The line of work that's closest in spirit aims to identify classes of auction mechanisms that are simultaneously envy-free and IC (in our notation: classes for which $\ICRegret= \ICEnvy = 0$). Feldman et al. \cite{FL12} and Goldberg et al. \cite{GH03} studied the conditions that are required in order to have mechanisms that are efficient, truthful and envy-free and that VCG satisfy these properties for capacitated valuation functions. For homogeneous capacities there's a class of mechanisms that achieve this, while for heterogeneous capacities there is no mechanisms that simultaneously achieved all 3 conditions. Cohen et al. \cite{CFFKO10} provided a characterization based on cycle-monotonicity of the allocation functions that are  incentive-compatible and envy free without considering the efficiency of the algorithms. 

The notion of envy-freeness was initially introduced by Varian \cite{varian1973equity} and Foley \cite{foley1967resource}. The key property of an envy-free allocation is that buyers prefer the bundle of goods they receive over any other allocated bundles (given bundle prices). The notion is particularly appealing due to its connection to markets: in an envy-free allocation, given the prices for goods, all buyers prefer to buy the bundle that's assigned to them. More recently, the notion of envy-freeness has been deeply studied with a different perspective that involves item-pricing \cite{GHKKKM05,DBLP:conf/wine/Colini-Baldeschi16} and bundle-pricing \cite{FW09,FFLS12,DBLP:conf/wine/Colini-BaldeschiLSZ14}. In our setting there is no difference between those two models. A similar line of work focused in studying envy-free algorithms, both in the item-pricing and the bundle pricing models, when the bidders have budget constraints \cite{FFLS12,DBLP:conf/wine/Colini-Baldeschi16,DBLP:conf/wine/Colini-BaldeschiLSZ14,DBLP:conf/mfcs/BranzeiFMZ17,TZ15}.

In much of the other related work, envy-freeness is taken as an alternative solution concept to IC (e.g. \cite{HY11,DS16,CLSZ14,FFLS12,CLZ16,FW09}) and in contexts outside of the auction domain (e.g. \cite{christodoulou2011global,fleischer2011lower,CFFKO10b,MS18}). Of particular note: Daskalakis and Syrgkanis \cite{DS16} address the relation between envy-freeness and incentive-compatibility in the context of algorithmic learning. In particular, the authors discussed the computational complexity of no-regret learning algorithms and no-envy algorithms in simultaneous second price auctions. Hartline and Yan \cite{HY11} studied the relation between envy-freeness and incentive compatibility in revenue-maximizing prior-free mechanisms. Lipton et al.  \cite{lipton2004approximately} investigated envy-free mechanisms in the context of indivisible items with focus on the computational complexity of finding allocations with minimum envy. Moreover, they proved that is possible to obtain truthful mechanisms with bounded envy. Those results have been simplified and extended by Caragiannis et al. \cite{CKKK09}. While this line of work is interesting, it does not quantitatively address the relationship of envy and IC regret.

\subsection{Our Contributions}
This paper has 4 main contributions:
\begin{enumerate}
	\item First, in Section~\ref{s:truthful}, we define a class of auction mechanisms---which includes VCG, GSP, and GFP for position auctions---where  \ICEnvy{} is tightly related to \ICRegret{}. For this class we give necessary and sufficient conditions for $\ICEnvy \ge \ICRegret$ and mild supplementary conditions under which they are exactly equal.
	\item Secondly, in Section~\ref{s:egsp} we consider the more general Ad Types auction environment \cite{CMSW19} in which different ads have different discount curves. We show that for VCG and a suitable generalization of GSP it still holds that $\ICEnvy \ge \ICRegret$.
	\item Third, in Section~\ref{s:swl}, we upperbound the social welfare loss in terms of \ICEnvy{} for the same sets of mechanisms introduced in Sections~\ref{s:truthful} and \ref{s:egsp}. We show that in equilibrium, the social welfare loss is \emph{at most} $4 \cdot \ICEnvy$ (under a technical condition we introduce in the section).
	\item Finally, in Section~\ref{s:feature}, we use bidding data from a major online publisher to show that \ICEnvy{} can be used as a feature to learn an estimator for \ICRegret. The estimator has low mean-squared error, and performs better than comparable estimators that are trained using other features from the auction like values and prices for different slots.
\end{enumerate}
\section{Preliminaries}
\label{s:preliminaries}

There are $n$ bidders and $m$ slots. Let $I$ be the set of bidders and $J$ be an (ordered) set of slots.  
Each bidder $i \in I$ has a valuation vector $\bm{v_i} = \langle v_{i,1}, v_{i, 2}, \ldots, v_{i,m}\rangle$ that is the willingness to pay of bidder $i$ for each slot $j$, with $ v_{i,1} \ge v_{i, 2} \ge \ldots \ge v_{i,m}$, and are unit demand.  In the standard position auction environment, slots have common quality factor $\alpha_1\ge \alpha_2\ge \ldots\ge \alpha_m$ such that for each bidder $i$ and slot $j$ we have $v_{i,j} = v_i\cdot \alpha_j$ for private value $v_i\in\RR$ of the bidder. In the Ad Types setting, each ad has a type $\theta$ and for each type there is a separate discount curve $\alpha_{\theta,1} \ge \alpha_{\theta,1} \ge ... \ge \alpha_{\theta,m}$. Note that when there's is only a single type, the setting specializes to the position auction environment. Unless specified differently, let $\theta_i$ refer to the type of ad $i$.

The slots are allocated to the bidders by a (direct-revelation) mechanism $\mathcal{M}$. The mechanism $\mathcal{M}$ is defined by an allocation function $\mathcal{A} : \RR^n \to \NN^{n}$ and a payment function $\mathcal{P} : \RR^n \to \RR^n$. Since bidders' values $v_i$ are private, the mechanism solicits bids $b_i$ to represent the values, though reports may not be truthful. Let $\bm{v}$ be the valuation vector of all the bidders and  $\bm{b}$ the bid vector. After receiving the bids from all the bidders, the mechanism $\mathcal{M} = \langle \mathcal{A}, \mathcal{P} \rangle$ computes an outcome $\outcome$, i.e., $\mathcal{A}(\bm{b}) = \allocation$ and $\mathcal{P}(\bm{b}) = \payment$.

$\allocation$ describes the allocation of the slots to the bidders and $\payment$ describes how much each bidder is charged for the obtained slot.
In particular, $\allocation = \langle X_1, X_2, \ldots, X_n \rangle$ where $X_i = j$, if the bidder $i$ obtains the slot $j$ and $0$ if she does not receive any slot. And $\payment= \langle p_1, p_2, \ldots, p_n\rangle$ where $p_i \in \mathbb{R}_{\geq 0}$ is the price that the bidder $i$ pays for slot $X_i$.

For an allocation $\allocation$ and a valuation vector $\bm{v}$, the social welfare of the allocation $\allocation$ is $SW (\bm{v}, \allocation) = \sum_{i \in I}  v_{i} \alpha_{\theta_i, X_i}$.  The optimal social welfare is $SW^{OPT} (\bm{v}) = {\rm max}_{\allocation}\sum_{i \in I}  v_{i} \alpha_{\theta_i,X_i}$. The { Social Welfare Loss } is $SWL (\bm{v}, \allocation) = SW^{OPT} (\bm{v}) - SW  (\bm{v}, \allocation)$. When the valuation vector $\bm{v}$ is clear from the context, we will use $SW (\allocation)$, $SW^{OPT}$, and $SWL (\allocation) $. When the mechanism $\mathcal{M}$ and the truthful valuation vector $\bm{v}$ is clear from context to we use $SW ({\bf b})$, $SW^{OPT}$, and $SWL ({\bf b})$ with the understanding that $\allocation = \mathcal{A}(\bf b)$.

Given an outcome $\outcome$, the utility of a bidder $i$ with type $\theta_i$ is $u_i(X_i, p_i) = v_{i} \alpha_{\theta_i, X_i} - p_i$. Since the outcome of a mechanism $\mathcal{M}$ is a function of the bids, and the auctions we consider are not necessarily IC, bidders may be incentivized to report a type $\bm{b}$ different from $\bm{v}$ in order to produce an outcome with higher utility. 

\paragraph{IC-Regret.}  \ICRegret{} describes the outcome for bidding truthfully, compared to the optimal alternative bid (given constant competition $b_{-i}$). Formally, the regret of a bidder $i$ for bidding truthfully compared to a \emph{specific} alternative bid $b_i$ is:
\begin{equation}
r_i(b_i,b_{-i},v_i) = \max_{b_i \in \mathbb{R}^+} \{0, u_i(\mathcal{A}(b_i,b_{-i}),\mathcal{P}(b_i,b_{-i}) ) -
u_i(\mathcal{A}(v_i,b_{-i}),\mathcal{P}(v_i,b_{-i}) ) \},
\end{equation}
which is used in the formal definition for \ICRegret{}.
\begin{definition}[\ICRegret]
	The \ICRegret{} that bidder $i$ experiences is\footnote{Equation~\eqref{eq:ic-regret} in the introduction takes an expectation over competition $b_{-i}$ since the work of Feng et al.~\cite{FSS19} considers the auction mechanism as a black box. In our setting (from the perspective of the auctioneer) the alternative bids are known, and we define \ICRegret{} on an auction-by-auction basis.}
	\begin{equation*}
	\ICRegret_i(v_i, b_{-i}) = \max_{b_i\in \RR_{\geq 0}} \{r_i(b_i,b_{-i}, v_i)  \}.
	\end{equation*}
\end{definition}

\ICRegret{} can be directly connected to {\it incentive-compatibility (IC)}. Indeed, a mechanism $\mathcal{M}$ is IC iff for all $v_i$, $b_{-i}$, and  $i \in I$, we have $r_i(b_i,b_{-i}, v_i) = 0$ for all $b_i \in \RR_{\geq 0}$.

\paragraph{IC-Envy.} Given an allocation $\allocation$ and payments $\payment$, Envy describes how much a bidder prefers the allocation and price of another buyer, compared to what they received themselves. Since different bids may lead to a different auction outcome, we define \ICEnvy{} as Envy with respect to the allocation $\allocation$ and payments $\payment$ when bidder $i$ bids \emph{truthfully}. \ICEnvy{} is some notion of fairness of the produced outcome whereas \ICRegret{} measures how much the underlying mechanism incentives misreported types.

Formally, for given an allocation $\allocation$ and payments $\payment$, the envy that bidder $i$ experiences compared to bidder $j$ is
\begin{equation}
e_i^j(\allocation, \payment) = \max\{0, u_i(X_j,p_j) - u_i(X_i, p_i) \},
\end{equation}
and the envy of bidder $i$ in the outcome $\outcome$ is
\begin{equation}
E_i(\allocation, \payment) = \max_{j \in I \setminus \{i\}}\{e_i^j(\allocation, \payment) \}
\end{equation}
which is used in the formal definition for \ICEnvy{}.

\begin{definition}[\ICEnvy{}]
	\ICEnvy{} the envy of bidder $i$ in the outcome $\langle\mathcal{A}(v_i, b_{-i}), \mathcal{P}(v_i, b_{-i})\rangle$:
\begin{equation*}
\ICEnvy_i(v_i, b_{-i}) = \max_{j \in I \setminus \{i\}}\{e_i^j(\mathcal{A}(v_i, b_{-i}), \mathcal{P}(v_i, b_{-i})) \}.
\end{equation*}
\end{definition}

Note that computing \ICEnvy{} requires one single execution of the auction whereas the computation of \ICRegret{} requires the execution of the auction for multiple bid values of each bidder.

\section{Position Auction Environments}
\label{s:truthful}

As stated before,  \ICEnvy{} and \ICRegret{} measure different things: \ICEnvy{} provides some measure of fairness of the outcome, whereas \ICRegret{} measures the incentive-compatibility of the mechanism. In this section, we focus on position auctions that are widely used in search and feed advertising.\footnote{In display advertising it is more common to sell ad slots one-by-one, which is a special case of position auctions, though one which is arguably mathematically less interesting.} We give in the following the definition of {\em regular mechanism} for position auctions and we characterize the class of regular mechanisms that have  $\ICEnvy{} \ge \ICRegret{}$.  We assume wlog that the bidders are ordered by non-increasing bid $b_i$, with ties broken lexicographically. Therefore, slot $i$ is assigned to bidder $i$.  
\begin{definition}[Regular Mechanisms for  Position Auctions]
	\label{def:regular}
	A regular mechanism $\mathcal{M}$ for position auctions is defined as follows:
	
	\begin{enumerate}
		\item  Slots are assigned  in order of non-increasing $\alpha_i$ to bidders ordered by non-increasing bid value $b_i$.  Ties are broken lexicographically. 
		\item The payment for bidder $i$ is $p_i =  \sum_{k=1}^n a_{i,k} \cdot b_{k}$ with non negative coefficients $a_{i,k}\geq 0.$
	\end{enumerate}
\end{definition}
\noindent Note that this  definition includes several widely used auction mechanisms: 
\begin{itemize}
\item VCG:  $a_{i,k}= 0$ for $k=1,\ldots,i$, and  $a_{i,k}=\alpha_{k-1}-\alpha_{k}$ for $k=i+1,\ldots, n$.
\item GSP:  $a_{i,k}= 0$ for $k=1,\ldots,i$,  $a_{i, i+1}=\alpha_{k}$, and $a_{i,k}=0$ for $k=i+2,\ldots, n$.
\item GFP: $a_{i,k}= 0$ for $k=1,\ldots,i-1$, $a_{i,i}=\alpha_{k}$, and $a_{i,k}=0$ for $k=i+1,\ldots, n$.
\end{itemize}

In this section we provide necessary and sufficient conditions for a regular mechanism to be individually rational, i.e., no bidder is charged more than her bid, and to have for each bidder $i$, $\ICEnvy_i(v_i) \geq \ICRegret_i(v_i).$ 

\begin{lemma}
\label{thm:iff}
For a regular mechanism for position auctions the following properties hold
\begin{itemize}
\item[i] Individual Rationality;
\item [ii]  $\ICEnvy (v_i) \geq \ICRegret (v_i)$,
\end{itemize}  
if and only if, for each slot $i$, 
\begin{itemize}
\item $a_{i,k}=0, k=1,\ldots, i$, and 
\item $ p_i - p_{i+1}\geq  (\alpha_{i}- \alpha_{i+1})  b_{i+1}$ for $i=1, \ldots, n-1$. 
\end{itemize}
\end{lemma}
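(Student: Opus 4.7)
The statement is an iff, so my plan has two directions, with the sufficient direction being the substantive work. Fix bidder $i$, who occupies slot $i$ under truthful bidding. The first condition $a_{i,k}=0$ for $k\le i$ immediately implies $p_i=\sum_{k>i}a_{i,k}b_k$ is independent of $b_i$, so any deviation keeping bidder $i$ in slot $i$ leaves utility unchanged and contributes zero to regret. The analysis reduces to deviations that move bidder $i$ to a different slot $k$, and the goal is to compare the resulting regret against the envy of $i$ toward the bidder originally at $k$.

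For a downgrade ($k>i$), bidders originally at slots $i+1,\ldots,k$ shift up one position while bidders at slots greater than $k$ are unaffected; combined with $a_{k,j}=0$ for $j\le k$, this gives $p_i'=\sum_{j>k}a_{k,j}b_j=p_k$ exactly, so the regret $(v_i\alpha_k-p_k)-(v_i\alpha_i-p_i)$ is by definition the envy of $i$ toward $k$. For an upgrade ($k<i$), bidders at slots $k,\ldots,i-1$ shift down one position, and direct substitution yields $p_i'=p_k+\sum_{j=k+1}^{i}a_{k,j}(b_{j-1}-b_j)\ge p_k$ because the coefficients are non-negative and bids are non-increasing. Hence regret is bounded by the envy for $k$. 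Taking the maximum over $k$ yields $\ICRegret_i(v_i)\le\ICEnvy_i(v_i)$. For individual rationality, I telescope $p_i-p_{i+1}\ge(\alpha_i-\alpha_{i+1})b_{i+1}$ from $p_n=0$ (which follows because $a_{n,k}=0$ for every $k$) and combine with the non-negativity and structure of the $a_{i,k}$ to derive $p_i\le v_i\alpha_i$.

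For necessity, I reason contrapositively. If $a_{i,i}>0$, then at a profile with $b_i>b_{i+1}$ bidder $i$ can shade their bid, remain in slot $i$, and gain up to $a_{i,i}(b_i-b_{i+1})$; a GFP-style calculation shows the resulting regret can strictly exceed envy. A variant using the upgrade-shift identity rules out $a_{i,k}>0$ for $k<i$. If $p_i-p_{i+1}<(\alpha_i-\alpha_{i+1})b_{i+1}$, then bidder $i+1$'s envy toward $i$ is strictly positive, and the same upgrade-shift identity shows bidder $i+1$'s best upgrade regret matches this envy value, forcing a violation of $\ICEnvy\ge\ICRegret$ or of IR.

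The main obstacle I expect is the shift bookkeeping in the upgrade case and the careful interplay between the telescoping argument and the coefficient structure needed for IR; tracking how the correction terms $a_{k,j}(b_{j-1}-b_j)$ accumulate under bid monotonicity is where the algebra gets finicky. Once that identity is set up cleanly, both the envy-regret comparison in sufficiency and the deviation-based necessity argument fall out of the same computation.
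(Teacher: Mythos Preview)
Your sufficiency argument for $\ICEnvy\geq\ICRegret$ is essentially the paper's: both compare the post-deviation payment to the target slot's original payment, obtaining equality for downgrades and an inequality for upgrades via the shift in the bid ordering (the paper states this more tersely as $p_j(b_i',b_{-i})\geq p_j(v_i,b_{-i})$, while you write out $p_i'=p_k+\sum_{j=k+1}^{i}a_{k,j}(b_{j-1}-b_j)$ explicitly). A minor difference is that the paper additionally uses the second payment condition to show that upward envy is zero, whereas your argument for part (ii) never invokes it; the paper also notes the tie-breaking caveat that not every slot $j>i$ is reachable by a deviation, which is where the strict inequality $\ICEnvy>\ICRegret$ can arise and which you omit.

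There are, however, two genuine gaps. First, your IR step does not work: telescoping $p_i-p_{i+1}\geq(\alpha_i-\alpha_{i+1})b_{i+1}$ from $p_n=0$ yields a \emph{lower} bound on $p_i$, not the upper bound $p_i\leq v_i\alpha_i$ that IR requires, and nothing in the two stated conditions caps $\sum_{k>i}a_{i,k}$. (The paper's own proof does not establish IR on the sufficiency side either; on the necessity side it derives $a_{i,k}=0$ for $k<i$ directly from an IR violation, which is cleaner than your vague ``variant using the upgrade-shift identity''.) Second, your necessity argument for the second condition is backwards: your own upgrade identity gives $p_i'\geq p_k$, so bidder $i+1$'s regret for moving up to slot $i$ is \emph{at most} her envy for slot $i$, not equal to it; positive upward envy therefore cannot force $\ICRegret>\ICEnvy$. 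In fact the paper does not prove necessity of the second condition inside this lemma at all---it is deferred to Theorem~\ref{coro:all-slots}, where the target is the \emph{equality} $\ICEnvy=\ICRegret$ under distinct bids.
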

\begin{proof}
We start by proving the necessity of the first condition of the claim. Conditions  $a_{i,k}=0$, $k=1,\ldots, i-1$ are needed to ensure the individual rationality of the mechanism. Indeed, if there exists a coefficient $a_{i,k} >0$, $k<i$, any bidder $i$ with  $v_i<a_{i,k} b_k$ will be charged more than its valuation, thus violating individual rationality. 
Condition  $a_{i,i}=0$ is also needed. Otherwise, if  $a_{i,i}>0$, bidder $i$ may have $r_i(v_i-\epsilon ,b_{-i},v_i)>0$ while $\ICEnvy(v_i)=0$ as shown in Example~\ref{ex:fp}. 

Next, we prove that the second condition of the claim is  sufficient. We first prove that regret and envy are $0$  for all slots $j < i$.  For envy, we derive: 
\begin{align*}
e_i^j(\allocation, \payment) &= \max\{0, u_i(X_j,p_j) - u_i(X_i, p_i) \} \\
					  &\le  \max\{0, (\alpha_j - \alpha_i) v_i - (p_j - p_i) \}\\
					  &\leq 0,
\end{align*}
with the last inequality obtained by the second condition on the payments.  For regret, note that for a bid $b_i'$ such that $X_i=j$, $j< i$, we have a payment $p_j(b_i', b_{-i}) \geq p_j(v_i, b_{-i})$ and therefore $r_i(b_i', b_{-i}, v_i) \leq e_i^j \leq 0$. 
 
Envy and regret  for bidder $i$ can only be positive for a slot $j > i$.   if slot $j$ can be obtained from bidder $i$ by decreasing her bid to a value $b_i'$, then, given the first condition of the theorem, the payment charged to agent $i$ for a bid $b_i'$   that gives him  slot $j \geq i$ is exactly equal to the payment  charged to the bidder that received slot $j$ under bid vector $b=(v_i,b_{-i})$, namely, $p_j(b_i', b_{-i}) = p_j(v_i, b_{-i})$. 
The reason is that,  all the bids $b_{j+1},\ldots, b_n$ that determine the payment are unchanged. Therefore,  $e_i^j(\allocation, \payment) = r_i(b_i', b_{-i}, v_i)$. 

However, not all slots $j>i$ can be obtained from bidder $i$ by decreasing her bid since ties are broken lexicographically.  If there exists a slot $j>i$ that cannot be obtained from bidder $i$ for any bid $b_i'<v_i$, then we have still  the possibility that $e_i^j(\allocation, \payment) > R_i(v_i, b_{-i}$, and therefore  $\ICEnvy (v_i) \geq \ICRegret (v_i)$. 
\end{proof}

The part where the strict inequality $\ICEnvy (v_i) > \ICRegret (v_i)$ came in was due to the lexicographic tie-breaking when there are ties. When bids are distinct this case disappears and  $\ICEnvy (v_i) = \ICRegret (v_i)$.

\begin{theorem}
\label{coro:all-slots}
When all bids $b_1, \ldots, b_n$ are different,  the conditions of Lemma~\ref{thm:iff} are necessary and sufficient for individual rationality and  $\ICRegret(v_i) =  \ICEnvy(v_i).$
\end{theorem}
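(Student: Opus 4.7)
The plan is to leverage Lemma~\ref{thm:iff} for almost everything. That lemma already establishes both (i) the necessity of the listed conditions for individual rationality and for $\ICEnvy(v_i) \geq \ICRegret(v_i)$, and (ii) the sufficiency of those conditions for $\ICEnvy(v_i) \geq \ICRegret(v_i)$. Since the equality in the theorem trivially implies the inequality, the necessity direction of the theorem is inherited from the lemma without additional work. Sufficiency for IR is also immediate. The only remaining task is to prove the reverse inequality $\ICRegret_i(v_i,b_{-i}) \geq \ICEnvy_i(v_i,b_{-i})$ under the added assumption that all bids are distinct.

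Fix bidder $i$ and let $j^\star \in \argmax_j e_i^j(\allocation,\payment)$ with envy computed at the outcome produced by truthful bidding. If $e_i^{j^\star}=0$ there is nothing to prove, so assume it is strictly positive. The proof of Lemma~\ref{thm:iff} already shows that under the stated conditions envy toward any slot $j < i$ is zero, so $j^\star > i$. I will exhibit an alternative bid $b_i'$ that (a) moves bidder $i$ into slot $j^\star$, and (b) makes her pay exactly $p_{j^\star}$, the original payment of the bidder who held that slot when bidder $i$ reported truthfully.

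For (a), since all bids are distinct real numbers I can choose $b_i'$ in the open interval $(b_{j^\star+1},\, b_{j^\star})$, taking $b_{n+1}=0$ if $j^\star=n$, and also distinct from every other bid. Under this report, bidders originally occupying slots $1,\dots,i-1,i+1,\dots,j^\star$ all retain bids strictly above $b_i'$, while those at slots $j^\star+1,\dots,n$ have bids strictly below $b_i'$. Hence bidder $i$ lands at slot $j^\star$, and the bidders occupying slots $j^\star+1,\dots,n$ (and their bids) are unchanged. For (b), the conditions of Lemma~\ref{thm:iff} force $a_{j^\star,k}=0$ for $k\leq j^\star$, so the payment charged to whoever occupies slot $j^\star$ is $\sum_{k>j^\star} a_{j^\star,k}\, b_{(k)}$, where $b_{(k)}$ is the $k$-th largest bid; those summands are identical before and after the deviation, so bidder $i$'s new payment equals $p_{j^\star}$. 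Combining (a) and (b),
\begin{equation*}
r_i(b_i',b_{-i},v_i) \;=\; (v_i\alpha_{j^\star}-p_{j^\star}) - (v_i\alpha_i - p_i) \;=\; e_i^{j^\star}(\allocation,\payment) \;=\; \ICEnvy_i(v_i,b_{-i}),
\end{equation*}
which gives $\ICRegret_i \geq \ICEnvy_i$. Together with Lemma~\ref{thm:iff}, this yields equality.

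The only non-routine step is (a): one must verify that the perturbation places bidder $i$ precisely at slot $j^\star$ without disturbing the identities or bids of the bidders below her. Distinctness is exactly what guarantees that the interval $(b_{j^\star+1},b_{j^\star})$ is nonempty and bypasses the lexicographic tie-breaking obstruction identified at the end of the proof of Lemma~\ref{thm:iff}; with ties this step can fail, which is precisely why strict inequality was possible in the lemma.
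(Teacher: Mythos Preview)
Your proof is correct and follows essentially the same approach as the paper. The sufficiency arguments are identical: distinct bids let bidder $i$ pick $b_i' \in (b_{j^\star+1}, b_{j^\star})$ to land in slot $j^\star$ while paying exactly $p_{j^\star}$ (since $a_{j^\star,k}=0$ for $k\leq j^\star$ and the bids below are unchanged), so the maximal envy is realizable as regret. The only cosmetic difference is in the necessity direction: you take the clean shortcut of citing Lemma~\ref{thm:iff} (equality implies the inequality, so the lemma's necessary conditions carry over), whereas the paper re-argues explicitly that the payment-gap condition $p_i - p_{i+1}\ge(\alpha_i-\alpha_{i+1})b_{i+1}$ is needed for equality---a presentational difference, not a different route.
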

\begin{proof}
In addition to the necessary conditions of Lemma~\ref{thm:iff} we prove that  $p_i - p_{i+1}\geq  (\alpha_{i}- \alpha_{i+1})  b_{i+1}$ is also necessary condition for $\ICRegret(v_i) =  \ICEnvy(v_i)$. By contradiction, consider the largest value of $i$ such that  $p_i - p_{i+1}< (\alpha_{i}- \alpha_{i+1})  b_{i+1}$. Bidder $i+1$ will envy the allocation of bidder $i$ since the marginal utility of having allocated slot $i$ instead of slot $i+1$ is positive.  On the other hand, increasing the bid of bidder $i+1$ to obtain slot $i$ will have negative regret for bidder $i+1$ if bid $b_i$ is enough bigger than bid $b_{i+1}$ and therefore the payment of $p_i$ will be increased more than the marginal utility of bidder $i+1$ for slot $i$.  

For the sufficient condition,  note that, given the fact that all bids are different, we never have ties in the allocation of a slot. Therefore, for every slot $j>X_i$,  there  exists a bid $b'_i$ such that $\mathcal{A}(b', b_{-i})$ produces $X_i = j,$ and the payment is exactly equal to the payment  charged to the bidder that received slot $j$ under bid vector $b=(v_i,b_{-i})$. Therefore $\ICRegret(v_i) =  \ICEnvy(v_i).$
\end{proof}

Lemma~\ref{thm:iff} and Theorem~\ref{coro:all-slots}  hold for mechanisms like VCG, GSP, and any combination of the two.  The first condition is clearly true for the two mechanisms.  The second condition is true with $p_i - p_{i+1}= (\alpha_{i}- \alpha_{i+1})  b_{i+1}$ for VCG and with $p_i - p_{i+1}=  \alpha_{i} b_i - \alpha_{i+1}  b_{i+1} \geq (\alpha_{i}- \alpha_{i+1})  b_{i+1}$ for GSP.  On the contrary, even the first condition of the theorem is violated for GFP.




%
\section{The Ad Types Environment}
\label{s:egsp}
We now consider a more general setting that position auction, in which bidders can have different types and thus face different discount curves, as proposed by Colini-Baldeschi et al. \cite{CMSW19}.  
Each bidder $i\in I$  is associated with a  vector $\alpha_{i,1}, \ldots, \alpha_{i,m}$  of non-increasing  quality values for the $m$ slots. 
The valuation of bidder $i$ on slot $j$ is  $v_{i,j}= v_i  \cdot \alpha_{i,j}$.  Valuation $v_i$ remains private information of the bidders, while the mechanism knows the quality value vectors ${\bf \alpha}_i$ for each bidder $i$. 

The social-welfare maximizing allocation is obtained by solving the Max-Weight Perfect Matching (MWPM) problem, 
for example by using the Hungarian method. This returns a perfect matching $M$ together with a dual certificate of its optimality.  
The certificate is a dual price vector $\bf p$ for the $m$ slots $J$ and a dual utility vector $\bf q$ for the $n$ bidders $I$, 
such that the value of the optimal solution is equal to $\sum_i p_i + \sum_j q_j$, i.e., the sum of the prices of the slots plus the sum of the utilities of the bidders. If bidder $i$ is matched to slot $j$, the dual constraint $p_j + q_i \geq  v_{i,j}$ holds with equality.  
Let $E^=$ be the set of tight edges in the final solution. The MWPW $M$ is therefore  a subset of  $E^=$. 

In the exposition, we assume that the number of ads $n$ is equal to the number of slots $m$ and that the matching $M$ is unique 
for each instance of the problem. This can be achieved by first adding slots if $n>m$ that each bidder values at $0$ or by removing 
the lowest slots if $n<m$, followed by a deterministic perturbation of position discounts to remove ties on the value of any subset of edges. A formal description of this process 
is given in the appendix (see Appendix~\ref{ss:ties}).

The following properties hold for shadow prices when using a suitable variation of the Hungarian method 
\cite{CMSW19, DBLP:journals/corr/KernMU16, CMSW19}: 

\begin{enumerate}
\item Wlog, the Hungarian algorithm can output dual prices $\bf p$ that are pointwise minimal over feasible dual solutions of the max-weight allocation \cite{CMSW19}.
\item For the minimal dual prices, prices $p_j$, $j=1,\ldots,m$  of the slots are non increasing, and the final slot is free: $p_m = 0$.
\item For the minimal dual prices, each bidder $j$ is connected with an alternating path $P_j \subseteq E^=$ to an item $j$ with price $p_j=0$ \cite{DBLP:journals/corr/KernMU16}.
\item The minimum dual prices are market clearing prices, i.e. each bidder is matched with a slot that maximizes their utility.
\end{enumerate}

In addition to the properties above, the following is true (the proof appears in the appendix).

\begin{claim} 
\label{claim:low_tight}
Each bidder $i$ is matched to the \emph{lowest} slot $j$ for which $(i,j) \in E^{=}$. 
\end{claim}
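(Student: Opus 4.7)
\emph{Proof plan.}
The plan is to argue by contradiction, leveraging the uniqueness of $M$ guaranteed by the perturbation of position discounts. The guiding observation is that any perfect matching in the tight-edge subgraph $E^=$ has weight equal to the dual value $\sum_j p_j+\sum_i q_i$, since on each tight edge $v_{i,j}=p_j+q_i$; hence $M$ is the unique perfect matching in $E^=$, which is equivalent to saying $E^=$ contains no $M$-alternating cycle.

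Suppose for contradiction that bidder $i$ is matched to slot $j$ in $M$ while $(i,k)\in E^=$ for some $k>j$, and let $i_k$ denote the bidder matched to slot $k$; then $(i,k)\in E^=\setminus M$. The first case to handle is $(i_k,j)\in E^=$: the four tight edges $(i,j)\in M$, $(i,k)\in E^=\setminus M$, $(i_k,k)\in M$, and $(i_k,j)\in E^=\setminus M$ form an $M$-alternating $4$-cycle in $E^=$, producing a second perfect matching in $E^=$ distinct from $M$ and contradicting uniqueness.

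For the harder case $(i_k,j)\notin E^=$, I would invoke property 3 to obtain $M$-alternating paths $P_i$ and $P_{i_k}$ in $E^=$ from bidders $i$ and $i_k$, respectively, to zero-price slots. Splicing these paths with the edges $(i,k)$ and $(i_k,k)$ yields a closed $M$-alternating walk based at $i$, from which I extract an $M$-alternating simple cycle (a standard fact in bipartite graphs), again contradicting the uniqueness of $M$. The main technical obstacle will be preserving the alternating structure at the splice points --- in particular at the terminal zero-price slot(s) of $P_i$ and $P_{i_k}$ and at the matching edge $(i_k,k)$, which may itself lie on $P_{i_k}$; I would resolve this by canonicalizing the starting edge type of each alternating path, using the hypothesis $k>j$ (together with the perturbation, which guarantees $p_j>p_k\ge 0$) to rule out degenerate configurations, and finishing with a parity argument before extracting the simple cycle.
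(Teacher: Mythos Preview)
Your high-level strategy --- reduce to finding an $M$-alternating cycle in $E^=$ and invoke uniqueness --- is exactly right, and your Case~1 is clean. The gap is in Case~2: the splice you describe does not produce a closed $M$-alternating walk. Concretely, the alternating path $P_{i_k}$ guaranteed by property~3 necessarily begins with the matching edge $(i_k,k)$, so the walk $i\xrightarrow{\text{non-}M}k\xrightarrow{M}i_k\xrightarrow{?}\cdots$ dead-ends at $i_k$ (the only continuation along $P_{i_k}$ goes right back to $k$), while the alternative $i\xrightarrow{\text{non-}M}k$ followed by the tail of $P_{i_k}$ from $k$ onward places two non-$M$ edges at $k$. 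Your ``canonicalizing the starting edge type'' and ``parity'' remarks do not resolve this, and nothing forces the zero-price endpoints of $P_i$ and $P_{i_k}$ to coincide. Note also that an arbitrary cycle in $E^=$ is \emph{not} enough: you really need an $M$-alternating one to swap to a second perfect matching.

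The paper closes exactly this gap with two ingredients you are missing. First, it proves a structural observation (via optimality of the restricted dual on bidders/items with index $\ge i$) that the alternating path $P_i$ from bidder $i$ to the price-zero item traverses only items of index $\ge i$. Second, it runs a reverse induction on $i$: the inductive hypothesis (each bidder $\ell>i$ has no tight edge to any item $>\ell$) pins down the structure of $P_k$ for $k>i$ --- it stays entirely within bidders and items of index $\ge k$. With that in hand, the edge $(i,k)$ followed by the portion of $P_k$ from item $k$ to item $m$ gives a second path $P_i'$ from $i$ to item $m$, distinct from $P_i$; the two together force a cycle of tight edges, contradicting uniqueness. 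The induction and the restricted-dual observation are doing the real work that your direct case analysis cannot, because they are what guarantee the second path lands at the \emph{same} endpoint as $P_i$ and stays inside the right region.
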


\subsection{The Extended GSP Pricing Scheme}
We want to handle pricing schemes for this setting with different discount curves for different ads, but what pricing schemes should be considered? We focus on attention on pricing schemes with the following fundamental properties:

\begin{enumerate}
\item Prices are monotonically non increasing, i.e.,  for each bidder $i$, slots of lower quality do not have higher price;
\item $\ICEnvy \leq \ICRegret$
\end{enumerate}

The first property is an obvious requirement since the discount curves of all the bidders are non-increasing. 
Observe that in this setting, VCG prices satisfy both constraints trivially since $\ICEnvy = \ICRegret = 0$, but what about other pricing rules 
that charge higher prices? 
GSP for this setting can be generalized, e.g. as done in \cite{CW14,CSW18}, by considering \emph{charging the value for a slot corresponding to the lowest bid that maintains the same allocation}. For position auctions, this specializes to the normal GSP pricing scheme. The following example shows that this pricing scheme fails to preserve price monotonicity:

\begin{example}
Consider three bidders with valuations over three slots $(10,9,8), (7,6,4) ( 4,0,0)$. The threshold prices of the three slots are $p_1=2, p_2=3, p_3=0$.  For the first slot, we need to reduce the valuation of bidder $3$ to 2 in order to have the first two slots assigned to bidders $1$ and $2$ for a total value of $16$. For the second slot, if we scale down by $1/2$ the bid of bidder $2$ (we are in a single parameter setting) and we bring the values of bidder $2$ to $(3.5, 3,2)$, it is convenient to assign slot $2$ to bidder $1$ instead of bidder $2$ that is now assigned to slot $3$. 	
\end{example}

\paragraph{Extended GSP.} The \emph{extended GSP} pricing scheme prices slots based on the the analysis of the set  $E^=$ of tight edges connecting a bidder $i$ with the slot of price $0$.  Indeed, the standard GSP mechanism  with bidders of only one ad type prices each slot $i$ at the value of bidder $i+1$ for slot $i$, and the highest value of  a tight edge on the alternating path from bidder $i$  to slot $m$ with price $p_m=0$.  

However, we cannot extend directly this mechanism  to the case of different ad types. 
We show a simple example in which the final prices for the case of three different ad types cannot be 
higher than the VCG prices. 

\begin{example}
Consider three bidders with valuations over three slots $(10,9,8), (7,6,4) ( 4,0,0)$. 
The optimal VCG solution will match the bidders according to the order $3,2,1$ and will charge 
$p_1=2, p_2=1, p_3=0$.  We also observe that bidder $1$ is tight with slots $1$ and $2$.  However, the values of bidder $1$ for slots 
$1$ and $2$ are  higher than the values of bidders $3$ and $2$, respectively.  Bidder $2$ is also tight with slot $1$ but his value is higher than the value of bidder $3$ that has assigned slot $1$.  We must therefore conclude that the largest price we can charge for the slots are exactly equal to the VCG prices. 
\end{example}

The example above suggests to limit the maximum price that a slot can be charged. 
Consider a bid vector $\bf b$ for the $n$ bidders. 

\begin{definition}{\bf Extended GSP prices.}
\label{def:gsp-prices}
Consider bidder $i$ assigned to slot $j$ with VCG clearing price $p_j$.  Let $E^=_j$ 
be the set of tight edges $(i',j')$, with $j'\geq  j$ and bidder $i'$ matched with an item $j'\geq j$.  
Let $ t_{i,j} = {\tt max}_{(i',j')\in E^=_j} \{b_{i',j'}:  b_{i',j'} <\leq b_{i,j}\},$ i.e., 
the maximum value smaller than $b_{i,j}$ of an edge in $E^=_j$.  
We assume $t_{i,j}=0$ if no edge  in $E^=_j$ has value  smaller than $b_{i,j}$.
We define the extended GSP price of bidder $i$ for slot $j$ by 

\begin{eqnarray}
\label{def:gsp-prices}
gsp_{i,j}= {\tt max} \{p_j,  t_{i,j}\}
\end{eqnarray}

\end{definition}
Observe that the prices are non-anonymous since we must impose a different upper bound on the maximum value that we can charge each bidder for a given slot. 
Indeed, by similar arguments used for GSP in Section \ref{s:truthful}, these are intuitively the largest prices we can charge a bidder in a setting with different ad types without violating the property $\ICEnvy \geq \ICRegret$. Any larger value of the payment of bidder $i$ for item $j$ can only depend 
on the bid of bidder $i$. If we had this dependence, bidder $i$ can decrease his payment by decreasing the bid and therefore $\ICRegret$ would be higher than $\ICEnvy$. 

We first observe the following: 

\begin{claim}
The extended VCG prices $\gsp_{i,j}$ are monotonic, i.e., $\gsp_{i,1}\geq \ldots \geq \gsp_{i,m}$.
\end{claim}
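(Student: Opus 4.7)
The plan is to show that each of the two quantities inside the $\max$ defining $\gsp_{i,j} = \max\{p_j, t_{i,j}\}$ is individually non-increasing in $j$; since $\max$ preserves pointwise monotonicity in each argument, the claim will follow immediately. So the proof splits into two independent monotonicity statements, one for $p_j$ and one for $t_{i,j}$.

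For the first, monotonicity of the minimum dual prices $p_1 \geq p_2 \geq \cdots \geq p_m$ is already recorded as property~2 of the Hungarian output recalled just before Claim~\ref{claim:low_tight} (with $p_m = 0$). So no further work is needed there. The main step is therefore to show $t_{i,j} \geq t_{i,j+1}$. I would argue that the set of edge values over which the $\max$ defining $t_{i,j+1}$ is taken is contained in the corresponding set for $t_{i,j}$, after which the inequality on the maxima is immediate. Two separate monotonicities feed into this. First, $E^=_{j+1} \subseteq E^=_j$: any tight edge $(i',j')$ with $j' \geq j+1$ and with $i'$ matched to a slot $\geq j+1$ in particular has $j' \geq j$ and $i'$ matched to a slot $\geq j$, which are exactly the two defining conditions of $E^=_j$. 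Second, the upper bound $b_{i,j}$ that $t_{i,j}$ imposes on admissible edge values is itself non-increasing in $j$, since $b_{i,j} = b_i \cdot \alpha_{i,j}$ and the discount curve $\alpha_{i,1} \geq \cdots \geq \alpha_{i,m}$ is non-increasing by assumption on the Ad Types environment. Tightening both the underlying set and the upper bound can only shrink the feasible region of the $\max$, so $t_{i,j+1} \leq t_{i,j}$, and combining with $p_j \geq p_{j+1}$ gives $\gsp_{i,j} \geq \gsp_{i,j+1}$.

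I do not anticipate any real obstacle; the argument is essentially a containment check. The only care needed is in unpacking the definition of $E^=_j$ precisely enough to verify $E^=_{j+1} \subseteq E^=_j$, and in deciding how to read the apparent typo ``$b_{i',j'} <\leq b_{i,j}$'' in Definition~\ref{def:gsp-prices}. Whether one takes the strict or the weak inequality, the ``smaller-set'' argument above goes through verbatim, so this ambiguity is harmless for the monotonicity claim.
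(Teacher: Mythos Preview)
Your proposal is correct and follows essentially the same approach as the paper: the paper's proof also appeals to exactly the three facts you isolate—non-increasing clearing prices $p_j$, the containment $E^=_{j+1}\subseteq E^=_j$, and the non-increasing discount curve $\alpha_{i,\cdot}$—and concludes immediately. Your write-up simply unpacks the containment argument and the role of the bound $b_{i,j}$ more explicitly than the paper does.
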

\begin{proof}
Monotonicity follows directly from the definition of extend VCG prices by observing that VCG clearing prices $p_j$ are non increasing, $E^=_j \subseteq E^=_{j-1}$, and, finally, $\bf \alpha_{i}$ is a vector of non increasing quality values.  
\end{proof}

We also observe that the extended GSP prices are not monotonic. An open problem that we pose is the one of finding anonymous monotonic prices for different ad types such that  $\ICEnvy \geq \ICRegret$. 



\subsection{$\ICEnvy \geq \ICRegret$ for Extended GSP}
In the following, we prove for extended  $GSP$  that $\ICEnvy \geq \ICRegret$. 
In order to prove this result, we need to argue as follows.  Assume bidder $i$ is matched to slot $j$ at price $\gsp_{i,j}$ with truthful bid $b_i=v_i$, and let us also assume there is envy for slot $\tilde j$ at price $\gsp_{i,\tilde j}$.  If  bidder $i$ modifies the bid to $\tilde b_i$ in order to be matched to $\tilde j$, then we have  $\tilde \gsp_{i,\tilde j} \geq gsp_{i,\tilde j}$, and therefore $\ICEnvy$ is at least as large as  $\ICRegret$. We use in the following the simple fact that a slot of higher quality value can only be obtained by increasing $b_i$, and, symmetrically, a slot of lower quality value can only be obtained by decreasing $b_i$.

Let us start by proving that the set of tight edges $E^=_j$ can only be larger if   bid $b_i$ is increased (proof appears in the appendix).  

\begin{claim}
\label{claim:price_increase}
Let  $b_i \leq \tilde b_i$, and let $j$ and $\tilde j \leq j$  the slots assigned to bidder $i$ with bids $b_i$ and $\tilde b_i$, respectively.
Then, $E^=_{\tilde j} \subseteq \tilde E^=_{\tilde j}$ and $\tilde p_{\tilde j} \geq p_{\tilde j}$.
\end{claim}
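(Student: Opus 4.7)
The plan is to prove the two halves of the claim via the monotonicity properties of the minimum market-clearing (dual) prices produced by the Hungarian method, together with a structural argument about how the matching reorganizes when a single bidder's bid is increased. Write $M, \bd{p}, \bd{q}$ for the matching and duals under bids $b_i$, and $\tilde M, \tilde{\bd{p}}, \tilde{\bd{q}}$ for those under $\tilde b_i$.

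For the price inequality $\tilde p_{\tilde j} \geq p_{\tilde j}$, I would invoke the classical monotonicity of minimum Walrasian prices in two-sided matching: if a single buyer's valuations for every item weakly increase, then the componentwise-minimum market-clearing prices weakly increase. A short self-contained justification proceeds by contradiction. If $\tilde p_{j''} < p_{j''}$ for some slot $j''$, one can patch $\tilde{\bd{p}}$ into $\bd{p}$ on the offending coordinates and, using the fact that bidder $i$'s value only went up, produce a feasible dual under the original bids that strictly dominates $\bd{p}$ at slot $j''$, contradicting the minimality of $\bd{p}$. Specializing to $j'' = \tilde j$ yields the desired inequality, and in fact $\tilde{\bd{p}} \geq \bd{p}$ coordinatewise.

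For the inclusion $E^=_{\tilde j} \subseteq \tilde E^=_{\tilde j}$, I would first argue that the set of bidders matched to slots in $\{\tilde j, \ldots, m\}$ is the same in $M$ and in $\tilde M$: raising $b_i$ moves bidder $i$ from slot $j$ up to slot $\tilde j$ inside this bottom block, and the resulting rearrangement is a closed alternating cycle on the tight-edge subgraph contained entirely within the block, so no bidder from outside $\{i\} \cup S$ (where $S$ is the set of other bottom-block bidders in $M$) enters, and no bidder in $S$ leaves. Granting this, fix $(i',j') \in E^=_{\tilde j}$ with $i' \neq i$: bidder $i'$ is matched to some $\tilde X_{i'} \geq \tilde j$ in $\tilde M$, and the minimality of $\tilde{\bd{p}}$ on the bottom block (established via the alternating-path-to-a-zero-price-item property) forces the price differences $\tilde p_{j'} - \tilde p_{\tilde X_{i'}}$ to match $p_{j'} - p_{X_{i'}}$ along the tight-edge subgraph of the block, which is exactly tightness of $(i',j')$ under $(\tilde{\bd{p}}, \tilde{\bd{q}})$. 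The case $i' = i$ is handled directly using that the new dual is market-clearing at bidder $i$.

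The main obstacle is the structural statement that raising $b_i$ triggers a swap cycle contained in the bottom block, rather than an alternating path that escapes it. I would establish this by a continuity argument: let $b_i(t) = b_i + t(\tilde b_i - b_i)$ and track $M(t)$ as $t$ grows from $0$ to $1$. At each critical $t$ the matching changes by an alternating swap along tight edges, and I would show that each such swap must involve only bidders currently matched to slots in $\{\tilde j, \ldots, m\}$---the alternative would require a tight edge from a top-block bidder down into the bottom, but the monotonicity of $\bd{p}$ together with the non-increasing discount curves $\alpha_{i,\cdot}$ and the assumption $\tilde j \leq j$ rules this out. Combined with the price monotonicity established first, this delivers both conclusions of the claim.
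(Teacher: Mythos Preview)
Your approach diverges from the paper's. The paper proceeds by an incremental argument: it moves from $b_i$ to $\tilde b_i$ through a sequence of $\epsilon$-increases in two phases (first on slots $1,\ldots,\tilde j$, then on $\tilde j+1,\ldots,m$), and at each step \emph{explicitly constructs} the new dual---raising $p_l$ by $\epsilon$ on the affected top slots and lowering $q_{i'}$ by $\epsilon$ for every bidder not tight with a slot in $[k+1,\ldots,m]$. Both the price inequality and the non-contraction of $E^=_{\tilde j}$ fall out of this construction directly; no separate structural analysis of how the matching reorganizes is needed. You instead separate the two conclusions: Walrasian price monotonicity for $\tilde p_{\tilde j}\ge p_{\tilde j}$, and an alternating-cycle argument for the tight-edge inclusion.

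Your price-monotonicity half is correct and cleaner than the paper's treatment of that inequality. The second half, however, has two gaps.

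First, the reason a top-block bidder cannot enter the bottom block along a tight-edge swap is not ``monotonicity of $\bd{p}$ together with the non-increasing discount curves $\alpha_{i,\cdot}$.'' Those facts do not by themselves forbid a bidder matched to slot $s<\tilde j$ from being tight with some slot $\ge\tilde j$. What you actually need is Claim~\ref{claim:low_tight}: every bidder is matched to its \emph{lowest} tight slot, hence a top-block bidder has no tight edge into the bottom block. Once you invoke that, your cycle argument can be made to work (entering a top slot forces all subsequent $M$- and $M'$-edges to stay in the top block, so the cycle cannot close back in the bottom).

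Second, and more seriously, the tightness-preservation step is not justified. You claim that minimality of $\tilde{\bd p}$ ``forces the price differences $\tilde p_{j'}-\tilde p_{\tilde X_{i'}}$ to match $p_{j'}-p_{X_{i'}}$ along the tight-edge subgraph of the block,'' but $X_{i'}$ and $\tilde X_{i'}$ may differ, the tight-edge subgraph on the bottom block may change between the two instances, and minimality alone does not pin down relative prices along an \emph{old} tight edge. To close this you would need to show that the minimal duals restricted to the bottom block coincide for the two bid profiles (up to the obvious shift at bidder $i$)---and that is essentially what the paper's explicit $\epsilon$-step construction delivers without first isolating the bottom block. As written, your route needs that missing lemma before the inclusion $E^=_{\tilde j}\subseteq\tilde E^=_{\tilde j}$ follows.
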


The second claim considers the decrease of bid $b_i$. 

\begin{claim}
\label{claim:price_decrease}
Let  $\tilde b_i \leq b_i$ and let $\tilde j$ and $j \leq \tilde j$,  the items assigned to bidder $i$ with bids $\tilde b_i$ and $b_i$, respectively. Then, 
$\tilde E^=_{\tilde j}= E^=_{\tilde j}$ and $p_{\tilde j} = \tilde p_{\tilde j}$. 
\end{claim}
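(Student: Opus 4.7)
The plan is to mirror the sensitivity argument in Claim~\ref{claim:price_increase} but to exploit the stronger invariance available when bidder $i$'s bid \emph{decreases}. The intuition is that reducing $b_i$ moves bidder $i$ from the higher-quality slot $j$ to the lower-quality slot $\tilde j \geq j$, so in the new matching $\tilde M$ bidder $i$ occupies the top of the ``tail'' $\{\tilde j, \tilde j+1, \ldots, m\}$. Since bidder $i$'s bid enters the LP only through her own edges, and she is matched in the tail in $\tilde M$, her bid plays no role in setting the lower bound of $\tilde p_{\tilde j}$: the matching equation $\tilde p_{\tilde j} + \tilde q_i = \tilde v_{i, \tilde j}$ merely fixes $\tilde q_i$ given $\tilde p_{\tilde j}$. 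Consequently, the sub-LP that pins down the minimum dual prices on slots $\geq \tilde j$ is driven by the bids of the other bidders, all of which are unchanged between $M$ and $\tilde M$.

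First, I would invoke an LP sensitivity / complementary slackness argument, together with the uniqueness assumption on the optimal matching, to show that the symmetric difference $M \triangle \tilde M$ is a single alternating cycle whose slots lie in the interval $[j, \tilde j]$. This implies that the bidders matched to $\{\tilde j+1, \ldots, m\}$ are identical in $M$ and $\tilde M$, and the matching on this sub-region is the same. Combined with the observation above about slot $\tilde j$ itself, this yields that the minimum dual prices restricted to $\{\tilde j, \ldots, m\}$ are the optimum of the same LP in both scenarios, establishing $p_{\tilde j} = \tilde p_{\tilde j}$ (and in fact $p_{j'} = \tilde p_{j'}$ for every $j' \geq \tilde j$).

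Second, with the duals on the tail equal, I would verify the set equality $\tilde E^=_{\tilde j} = E^=_{\tilde j}$ edge by edge. An edge $(i', j')$ with $j' \geq \tilde j$ and $i'$ matched to a slot $\geq \tilde j$ is tight exactly when $p_{j'} + q_{i'} = v_{i', j'}$; since the relevant dual prices, the dual utilities of the untouched bidders, and the valuations $v_{i', j'}$ all agree across the two scenarios (and the swap involving bidder $i$ at slot $\tilde j$ matches the constraint on both sides by construction), the two tight-edge sets coincide.

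The hard part will be the first step: rigorously establishing that the matching change is confined to the range $[j, \tilde j]$ in the Ad Types setting, where heterogeneous discount curves complicate the usual single-parameter monotonicity argument that works for standard position auctions. I expect to address this by combining complementary slackness with uniqueness of the max-weight matching to argue that any alternating cycle in $M \triangle \tilde M$ whose edge-weights change only on bidder $i$'s incident edges must pass through bidder $i$, and that the slots on this cycle must form a consecutive range in the tail ordering; any slot outside $[j, \tilde j]$ on the cycle would contradict optimality of one of the two matchings.
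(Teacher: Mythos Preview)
Your approach is genuinely different from the paper's, and the key step you flag as ``the hard part'' is indeed a gap that your sketch does not close.

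The paper does not analyze $M \triangle \tilde M$ at all. Instead it runs the two-phase incremental $\epsilon$-argument symmetric to Claim~\ref{claim:price_increase}: first it lowers $b_{i,\tilde j+1},\ldots,b_{i,m}$ and invokes Claim~\ref{claim:low_tight} (bidder $i$ has no tight edge to any slot $>\tilde j$) to conclude that these decreases leave $E^=_{\tilde j}$ and $p_{\tilde j}$ untouched; then it lowers $b_{i,1},\ldots,b_{i,\tilde j}$ and tracks the Hungarian dual update to see that only prices on slots $\le\tilde j$ move, so the tight-edge structure on the tail and $p_{\tilde j}$ are preserved. The invariance of the tail is thus obtained as a by-product of dual sensitivity, not from the combinatorics of the symmetric difference.

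Your route requires the confinement $M\triangle\tilde M\subseteq[j,\tilde j]$ up front, and the justification you offer---``any slot outside $[j,\tilde j]$ on the cycle would contradict optimality of one of the two matchings''---does not go through in the Ad Types setting. With heterogeneous discount curves there is no a priori reason a slot $k>\tilde j$ on the cycle violates optimality: the edges incident to such a slot have identical weight under $b$ and $\tilde b$, so the usual exchange inequality yields no contradiction. You would effectively need Claim~\ref{claim:low_tight} (or the dual-update reasoning it feeds) to rule this out, at which point you are back to the paper's machinery. A second, smaller issue: even granting confinement, the bidder occupying slot $\tilde j$ differs between $M$ and $\tilde M$, so the set of bidders eligible for $E^=_{\tilde j}$ versus $\tilde E^=_{\tilde j}$ is not literally the same; your edge-by-edge verification needs to handle this swap explicitly rather than treating the tail as entirely ``untouched''.
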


We then argue about the relation between $\ICRegret$ and $\ICEnvy$ when the bid $b_i$ of bidder $i$ is modified. 

\begin{lemma}
Let  $b_i \leq \tilde b_i$, and let $j$ and $\tilde j \leq j$  the slots assigned to bidder $i$ with bids $b_i$ and $\tilde b_i$, respectively.
Then  $\tilde gsp(i, \tilde j) \geq gsp(i, \tilde j)$.
\end{lemma}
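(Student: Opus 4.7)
The plan is to unpack the definition of extended GSP prices and invoke Claim~\ref{claim:price_increase} to push each ingredient of the $\max$ in the right direction. Recall that $\gsp_{i,\tilde j} = \max\{p_{\tilde j},\, t_{i,\tilde j}\}$, where $t_{i,\tilde j}$ is the largest value, below $b_{i,\tilde j}$, of an edge in $E^=_{\tilde j}$ (with the analogous definitions for the tilded quantities under bid $\tilde b_i$). So to prove $\tilde{\gsp}_{i,\tilde j} \ge \gsp_{i,\tilde j}$ it suffices to show both $\tilde p_{\tilde j} \ge p_{\tilde j}$ and $\tilde t_{i,\tilde j} \ge t_{i,\tilde j}$.

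First I would apply Claim~\ref{claim:price_increase} directly: since $b_i \le \tilde b_i$ and the slot assigned to bidder $i$ moves from $j$ to $\tilde j \le j$, the claim yields both $\tilde p_{\tilde j} \ge p_{\tilde j}$ and $E^=_{\tilde j} \subseteq \tilde E^=_{\tilde j}$. The first of these already handles the $p_{\tilde j}$ component of the max. For the $t_{i,\tilde j}$ component, I would observe that because we are in a single-parameter setting, the threshold $b_{i,\tilde j} = b_i \cdot \alpha_{i,\tilde j}$ satisfies $b_{i,\tilde j} \le \tilde b_i \cdot \alpha_{i,\tilde j} = \tilde b_{i,\tilde j}$, i.e. the cap defining the set of eligible edges has only grown.

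Combining these two monotonicities: any edge $(i',j')$ that is a candidate for $t_{i,\tilde j}$ satisfies $(i',j') \in E^=_{\tilde j} \subseteq \tilde E^=_{\tilde j}$ and $b_{i',j'} < b_{i,\tilde j} \le \tilde b_{i,\tilde j}$, hence is also a candidate for $\tilde t_{i,\tilde j}$. Therefore $\tilde t_{i,\tilde j} \ge t_{i,\tilde j}$, and taking the max with the price inequality gives
\begin{equation*}
\tilde{\gsp}_{i,\tilde j} = \max\{\tilde p_{\tilde j},\, \tilde t_{i,\tilde j}\} \;\ge\; \max\{p_{\tilde j},\, t_{i,\tilde j}\} = \gsp_{i,\tilde j},
\end{equation*}
which is the desired inequality.

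The only real obstacle is making sure the threshold in the definition of $t_{i,\tilde j}$ is handled correctly: this threshold depends on bidder $i$'s own bid, so it is not immediate from Claim~\ref{claim:price_increase} alone. Once we note that raising $b_i$ raises $b_{i,\tilde j}$ by the single-parameter structure, the two effects (larger tight-edge set and larger threshold) work in the same direction, and the rest is a one-line containment argument.
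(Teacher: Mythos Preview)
Your proof is correct and follows the same approach as the paper: invoke Claim~\ref{claim:price_increase} to obtain $E^=_{\tilde j}\subseteq \tilde E^=_{\tilde j}$ and $\tilde p_{\tilde j}\ge p_{\tilde j}$, and then conclude that $\gsp_{i,\tilde j}=\max\{p_{\tilde j},t_{i,\tilde j}\}$ can only increase. If anything you are more careful than the paper's one-line conclusion, since you explicitly track why the cap $b_{i,\tilde j}\le \tilde b_{i,\tilde j}$ (from the single-parameter structure) cooperates with the containment of tight-edge sets to give $\tilde t_{i,\tilde j}\ge t_{i,\tilde j}$; the paper leaves this step implicit.
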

\begin{proof}
Claim \ref{claim:price_increase} shows that the increase of  bid $b_{i}$ yields  $E^=_{\tilde j} \subseteq \tilde E^=_{\tilde j}$ and $\tilde p_{\tilde j} \geq p_{\tilde j}$.  Then, the price $\gsp (i,\tilde j)$ can only increase if  $b_i$  is increased. 
\end{proof}

\begin{lemma}
Let  $\tilde b_i \leq b_i$, and let $\tilde j$ and $\tilde j \geq j$  the items assigned to bidder $i$ with bids $\tilde b_i$ and $b_i$, respectively.
Then $\tilde gsp(i, \tilde j) = gsp(i, \tilde j)$.
\end{lemma}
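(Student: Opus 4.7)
The plan is to mirror the proof of the preceding lemma, invoking Claim~\ref{claim:price_decrease} in place of Claim~\ref{claim:price_increase}. The key difference is that Claim~\ref{claim:price_decrease} yields two \emph{equalities}, $\tilde E^=_{\tilde j} = E^=_{\tilde j}$ and $\tilde p_{\tilde j} = p_{\tilde j}$, rather than a one-sided inclusion and inequality; this stronger input is exactly what allows us to upgrade the $\geq$ of the previous lemma to the $=$ required here.

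Concretely, I would unfold Definition~\ref{def:gsp-prices} on both sides as $\gsp_{i,\tilde j} = \max\{p_{\tilde j},\, t_{i,\tilde j}\}$ and $\tilde{\gsp}_{i,\tilde j} = \max\{\tilde p_{\tilde j},\, \tilde t_{i,\tilde j}\}$. The clearing-price equality $\tilde p_{\tilde j} = p_{\tilde j}$ identifies the first entries of the two max's, so the task reduces to showing $t_{i,\tilde j} = \tilde t_{i,\tilde j}$. Because the index sets $\tilde E^=_{\tilde j} = E^=_{\tilde j}$ coincide and every tight-edge bid $b_{i',j'}$ with $i' \neq i$ is invariant under the perturbation $b_i \to \tilde b_i$ (only bidder $i$'s bid moved), the ``other-bidder'' contributions to the two maxima agree exactly, modulo a threshold test against $b_{i,\tilde j}$ versus $\tilde b_{i,\tilde j}$.

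The step I expect to be the main obstacle is the interaction between the lowered threshold $\tilde b_{i,\tilde j} \leq b_{i,\tilde j}$ in the perturbed scenario and the contribution of bidder~$i$'s own tight edges, whose values scale from $b_i \alpha_{i,j'}$ to $\tilde b_i \alpha_{i,j'}$. Two observations should close the gap. First, by definition the $t$-maximum only admits edges with value strictly below the relevant threshold, so bidder~$i$'s own edge $(i,\tilde j)$ never contributes in either scenario; her remaining tight edges $(i,j')$ with $j' > \tilde j$ have values scaled by $\alpha_{i,j'}/\alpha_{i,\tilde j} \leq 1$ relative to the threshold in both cases. Second, market clearing at slot $\tilde j$ in the perturbed scenario (where bidder $i$ is actually matched there at bid $\tilde b_i$) pins an ``other-bidder'' tight edge from $E^=_{\tilde j}$ as a witness whose value is independent of $b_i$ and sits below both thresholds, so the own-edge contributions of bidder~$i$ are dominated in both scenarios. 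Chaining these, $t_{i,\tilde j} = \tilde t_{i,\tilde j}$, and hence $\tilde{\gsp}(i,\tilde j) = \gsp(i,\tilde j)$.
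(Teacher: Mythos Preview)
Your approach is essentially the paper's: invoke Claim~\ref{claim:price_decrease} to obtain $\tilde E^=_{\tilde j}=E^=_{\tilde j}$ and $\tilde p_{\tilde j}=p_{\tilde j}$, and then read off $\tilde\gsp_{i,\tilde j}=\gsp_{i,\tilde j}$ from Definition~\ref{def:gsp-prices}. The paper's own proof is a single sentence that stops at the invocation of Claim~\ref{claim:price_decrease} and does not unfold the threshold discussion you add, so your extra care there already exceeds what the paper writes.
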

\begin{proof}
Claim \ref{claim:price_decrease} implies that $\tilde E^=_{\tilde j}= E^=_{\tilde j}$ and $p_{\tilde j} = \tilde p_{\tilde j}$. We therefore conclude that price $\gsp (i,\tilde j)$ is not reduced if  $b_i$  is decreased.
\end{proof}

We therefore conclude with the following: 

\begin{theorem}
For extended $GSP$ it holds $\ICRegret \leq ICEnvy$. 
\end{theorem}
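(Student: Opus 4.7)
The plan is to bound each deviation appearing in the IC-Regret supremum by the envy $i$ experiences for the slot that deviation would produce. Fix bidder $i$ with truthful bid $b_i = v_i$, matched to slot $j$ at extended GSP price $\gsp_{i,j}$. For any alternative bid $\tilde b_i$ that moves $i$ to slot $\tilde j$ at price $\tilde \gsp_{i,\tilde j}$, I aim to show
\[
\tilde \gsp_{i,\tilde j} \;\geq\; \gsp_{i,\tilde j}.
\]
Subtracting the truthful utility $v_{i,j} - \gsp_{i,j}$ from $v_{i,\tilde j} - \tilde \gsp_{i,\tilde j}$ and $v_{i,\tilde j} - \gsp_{i,\tilde j}$ then shows that the regret from deviating to $\tilde b_i$ is at most the envy of $i$ toward the bidder who receives slot $\tilde j$ under truthful reporting.

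To obtain this price inequality I split into two cases based on the direction of the deviation. If $\tilde b_i > b_i$, single-bidder monotonicity of the MWPM assignment forces $\tilde j \leq j$ (bidder $i$ can only move to a higher-quality slot), and the first of the two lemmas immediately preceding the theorem gives $\tilde \gsp_{i,\tilde j} \geq \gsp_{i,\tilde j}$. If instead $\tilde b_i < b_i$, then $\tilde j \geq j$ and the second lemma yields the stronger equality $\tilde \gsp_{i,\tilde j} = \gsp_{i,\tilde j}$. In both cases the required inequality holds, so every deviation contributes a regret term bounded by the corresponding envy term.

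Taking the maximum over all alternative bids $\tilde b_i$, and noting that every envy-target slot $\tilde j \neq j$ is realizable by an appropriate increase or decrease of $b_i$ (under the tie-breaking perturbation of Appendix~\ref{ss:ties}, crossing each competing bid swaps that bidder's slot with $i$'s), gives $\ICRegret_i(v_i,b_{-i}) \leq \ICEnvy_i(v_i,b_{-i})$.

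The main obstacle I anticipate is not the algebra of the two lemmas, but the correspondence between alternative bids and realizable alternative slots: one must verify that any slot $\tilde j$ witnessing envy is actually attainable through some deviation $\tilde b_i$. This is exactly where the tie-removing perturbation becomes essential, since without ties the allocation is a strictly monotone step function of $b_i$ and the slots it visits as $b_i$ ranges over $\mathbb{R}_{\geq 0}$ exhaust every slot held by another bidder. Once this correspondence is in place, the theorem follows by combining the two lemmas as outlined.
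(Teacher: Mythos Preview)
Your approach is essentially the paper's: argue that for any deviation $\tilde b_i$ landing $i$ on slot $\tilde j$, the price paid satisfies $\tilde\gsp_{i,\tilde j}\ge \gsp_{i,\tilde j}$, splitting into the upward and downward cases and invoking the two lemmas that immediately precede the theorem. That is exactly how the paper concludes.

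One remark: your third paragraph and the ``obstacle'' discussion about realizability of envy-target slots are unnecessary for the direction $\ICRegret\le\ICEnvy$. Once you have shown that every deviation's utility gain is bounded above by the envy term for the slot it reaches, taking the maximum over deviations already yields $\ICRegret_i\le\ICEnvy_i$, because the right-hand side is a maximum over a (possibly larger) index set. Realizability of every envy slot by some deviation would only be needed for the reverse inequality (equality), which the theorem does not assert; indeed, in the position-auction setting the paper explicitly allows $\ICEnvy>\ICRegret$ precisely when some envied slot is unreachable due to ties. So you can drop that part without loss.
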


\section{Measuring Social Welfare Loss with \ICEnvy{}}
\label{s:swl}
We proved in the previous sections that \ICEnvy{} is an upper bound on \ICRegret{} for a large class of mechanisms and that they are equal under mild conditions on the bid vector. 
%
%
%
%
%

We next show that \ICEnvy{} can also be used to measure the efficiency of the auction.  We show a direct connection between {\em Social Welfare Loss (SWL)} and the \ICEnvy{}  experienced by all the bidders. This connection will be proved under the assumption that any bidder $i$ is bidding a bid $b_i$ that gives a utility not lower than the truthful strategy, namely $u_i(b_i, b_{-i}) \geq u_i(v_i, b_{-i})$. Observe that we do not require the bidders to play a min-regret strategy, neither we assume the bid vector $\bm{b}$ to be a Nash Equilibrium.
Indeed, assuming that the bidders report a bid that is at least as good as their truthful valuations is not too restrictive, since reporting the truthful value is always a feasible option, whereas computing the min-regret strategy is computationally expensive. 

The connection between SWL and \ICEnvy{} is proved using the notion of {\em smoothness} which has been introduced in \cite{Roughgarden09}. This concept was introduced to prove bounds on the price of anarchy \cite{Roughgarden2017} of an auction at the equilibrium. We use a relaxation of the notion of smoothness called {\em semi-smoothness}. Semi-smoothness has been introduced \cite{LP11} with the goal of studying the efficiency of a position auctions even off equilibrium. 

The notion of semi-smoothness is defined as follows: Given a  bid vector $\bm{b}$, there exists an alternative bid vector  $\bm{b}'$ such that
$$\sum_{i \in I} u_i(b_i, \bm{b}_{-i})\geq \lambda SW(\bm{b}')-\mu SW(\bm{b})$$ for suitable constants $\lambda$ and $\mu$.  In our specific case we have $\lambda=1/2$ and $\mu=1$. The state $\bm{b}'$ is actually obtained by setting for each bidder $b'_i = v_i/2$.

In order to prove our result, we extend the following claim proved in \cite{LP11} for GSP to the extended GSP. We remind to the reader that positions are ordered by non increasing quality value $\alpha_j$.  We denote by $\pi(\bm{b},j)$ the advertiser allocated to slot $j$ under bid vector $\bm{b}$. We denote by $X(\bm{b},i)$ the slot allocated to bidder $i$ under bid vector $\bm{b}$.
Moreover, Let $R_i(b_i,\bm{b}_{-i} )$ be the maximum regret that the bidder $i$ can experience with respect to the bid $b_i$, i.e., 
\begin{equation*}
	R_i(b_i, b_{-i}) = \max_{b'_i\in \RR_{\geq 0}} \{r_i(b'_i,b_{-i}, b_i)  \}.
\end{equation*}
All proofs of this section appear in the appendix.
\begin{claim}
\label{cl:1}
	Fix a valuation profile $\bm{v}$ and an agent $i$. Let us denote $j=X(\bm{v},i)$ the optimal assignment to bidder $i$ with truthful bids. Consider any bid profile $\bm{b}$ an define $b_i'=v_i/2$.
	We claim that  $$u_i(v_i/2, \bm{b}_{-i}) + \alpha_{\pi(\bm{b},j),j} v_{\pi(\bm{b},j)} \geq 1/2 \alpha_{i,X(\bm{v},i)} v_i$$
\end{claim}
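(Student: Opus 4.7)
The plan is to adapt the semi-smoothness-style argument from the standard position auction (as in \cite{LP11}) to extended GSP in the Ad Types setting. Let $k = \pi(\bm{b}, j)$ where $j = X(\bm{v}, i)$, and consider bidder $i$'s deviation to $b'_i = v_i/2$. The key starting observation is that extended GSP is individually rational: since $\gsp_{i,j'}$ is bounded above by the winner's own effective bid $\alpha_{i,j'} b'_i = \alpha_{i,j'} v_i/2$, if bidder $i$ wins some slot $j^\star$ under the deviation then $u_i(v_i/2, \bm{b}_{-i}) \geq \alpha_{i,j^\star} v_i - \alpha_{i,j^\star} v_i/2 = \tfrac{1}{2}\alpha_{i,j^\star} v_i$. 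If $i$ wins no slot, the utility is $0$ but we can treat this as the degenerate subcase of the displacement analysis below.

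Next, I would do a case split on the slot $j^\star$ awarded to bidder $i$ under the deviation. In the first case, $\alpha_{i,j^\star} \geq \alpha_{i,j}$, meaning bidder $i$ obtains a slot she values at least as much as her optimal slot $j$; the IR lower bound above immediately gives $u_i(v_i/2,\bm{b}_{-i}) \geq \tfrac{1}{2}\alpha_{i,j} v_i$, and since $\alpha_{k,j} v_k \geq 0$, the claim follows. In the second case, $\alpha_{i,j^\star} < \alpha_{i,j}$, so bidder $i$ is displaced from her optimal slot. The sub-goal is to show $\alpha_{k,j} v_k \geq \tfrac{1}{2}\alpha_{i,j} v_i$, after which the claim follows even without the utility term. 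The argument here would use MWPM dual feasibility together with the structural monotonicity of tight edges recorded in Claims~\ref{claim:price_increase} and~\ref{claim:price_decrease}: since bidder $i$'s effective bid $\alpha_{i,j} v_i/2$ fails to capture slot $j$ in the new matching, the competing effective bid at slot $j$ must be at least $\alpha_{i,j} v_i/2$. A no-overbidding property $b_k \leq v_k$ then yields $\alpha_{k,j} v_k \geq \alpha_{k,j} b_k \geq \tfrac{1}{2}\alpha_{i,j} v_i$.

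The main obstacle is the second case. In standard position auctions the displacement argument is immediate because allocations are greedy in bids, but in the Ad Types setting the MWPM couples all edges globally, so ``bidder $i$ cannot capture slot $j$'' does not directly translate into a single edge-wise bid comparison. Formalizing this step requires reasoning along an augmenting path of tight edges in $E^=$ connecting bidder $i$ to slot $j$ under the deviation, using Claim~\ref{claim:low_tight} and the monotonicity of $E^=_j$ to compare the new matching to the old. A secondary but important technical concern is the no-overbidding condition $b_k \leq v_k$: the section's running hypothesis $u_i(b_i,\bm{b}_{-i}) \geq u_i(v_i,\bm{b}_{-i})$ together with IR ensures nonnegative utilities but does not immediately preclude bids exceeding values, so this either needs to be assumed explicitly on the bid profile $\bm{b}$ or derived from a strengthening of that hypothesis.
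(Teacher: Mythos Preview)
Your case split and the handling of the first case coincide with the paper's proof: when bidder $i$ ends up in a slot $j^\star$ with $\alpha_{i,j^\star}\ge \alpha_{i,j}$, individual rationality alone gives $u_i(v_i/2,\bm{b}_{-i})\ge \tfrac12\alpha_{i,j}v_i$.

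In the second case, however, the paper does something much simpler than your augmenting-path plan. It applies a pairwise swap inequality coming directly from optimality of the MWPM under the deviation profile $(v_i/2,\bm{b}_{-i})$: writing $\pi'=\pi((v_i/2,\bm{b}_{-i}),j)$ and $X'=X((v_i/2,\bm{b}_{-i}),i)$, optimality of the matching gives
\[
\tfrac{v_i}{2}\,\alpha_{i,X'}+b_{\pi'}\,\alpha_{\pi',j}\ \ge\ \tfrac{v_i}{2}\,\alpha_{i,j}+b_{\pi'}\,\alpha_{\pi',X'}\ \ge\ \tfrac{v_i}{2}\,\alpha_{i,j},
\]
and combining with IR ($u_i\ge \tfrac{v_i}{2}\alpha_{i,X'}$) yields the claim. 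No dual feasibility, no tight-edge monotonicity, no Claims~\ref{claim:price_increase}--\ref{claim:price_decrease} are needed. Your intended route via $E^=$ and dual prices does not obviously close: from ``bidder $i$ fails to get slot $j$ in the MWPM'' one cannot read off the single-edge comparison $\alpha_{\pi',j}b_{\pi'}\ge \tfrac12\alpha_{i,j}v_i$ by dual feasibility alone, since $p_j+q_i\ge \tfrac{v_i}{2}\alpha_{i,j}$ and $p_j+q_{\pi'}=b_{\pi'}\alpha_{\pi',j}$ relate different dual variables. The swap argument is the missing lever.

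Two further points. First, note that the paper's proof actually establishes the inequality with $\pi((v_i/2,\bm{b}_{-i}),j)$ rather than $\pi(\bm{b},j)$; the passage to $\pi(\bm{b},j)$ is handled separately in the surrounding derivation. Your attempt to work directly with $k=\pi(\bm{b},j)$ is what forces you toward the harder, path-based comparison; if you instead target the bidder at slot $j$ under the deviation, the swap argument applies cleanly. Second, your observation about needing a no-overbidding condition $b_k\le v_k$ to pass from bids to valuations is well taken: the swap argument bounds $b_{\pi'}\alpha_{\pi',j}$, and the paper writes $v_{\pi'}$ in its place without making this assumption explicit.
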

%
%
%
%

%
%
%
%

By applying the claim above we derive the following: 

\begin{eqnarray}
\sum_i R_i(b_i,\bm{b}_{-i} ) &\geq& \sum_i  u_i(b'_i, \bm{b}_{-i}) - \sum_i \alpha_{i,X(\bm{b},i)} v_i \nonumber\\
&\geq& \sum_i 1/2 \alpha_{i,X(\bm{v},i)} v_i - \sum_i  \alpha_{\pi((v_i/2,\bm{b}_{-i}),j),j} v_{\pi(\bm{b},j)} - \sum_i \alpha_{i,X(\bm{b},i)} v_i \nonumber\\
&\geq& \sum_i 1/2 \alpha_{i,X(\bm{v},i)} v_i - \sum_i  \alpha_{\pi(\bm{b},j),j} v_{\pi(\bm{b},j)} - \sum_i \alpha_{i,X(\bm{b},i)} v_i \nonumber\\
&\geq& \frac12 SW^{OPT} - 2 SW(\bm{b})
\end{eqnarray}

where the first inequality stems from the fact that the maximum regret of bidder $i$ playing $b_i$ is lower bounded by the regret obtained when playing $b'_i$ and the second inequality derives from the fact that the MWPM can only decrease if the valuation of a bidder  decreases. 

In the next claim we show that the regret at $v_i$ is larger than the regret at $b_i$. 

\begin{claim}
\label{claim:regretbound}
	$R_i(b_i, b_{-i})\leq R_i(v_i, b_{-i})$
\end{claim}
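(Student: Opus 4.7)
The proof is essentially a one-line unrolling of the definitions together with the standing hypothesis stated just before the claim, namely that every bidder plays a bid $b_i$ with $u_i(b_i, b_{-i}) \geq u_i(v_i, b_{-i})$. So my plan is to turn that one-line argument into three short steps.

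First, I would fix $b_{-i}$ and let $b^*$ be a bid achieving the maximum in the definition of $R_i(b_i, b_{-i})$, so that
\begin{equation*}
R_i(b_i, b_{-i}) \;=\; \max\bigl\{0,\; u_i(b^*, b_{-i}) - u_i(b_i, b_{-i})\bigr\}.
\end{equation*}
If the inner expression is non-positive, then $R_i(b_i, b_{-i}) = 0 \leq R_i(v_i, b_{-i})$ trivially, so I would reduce to the case where the $\max$ with $0$ is attained by the utility difference.

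Second, I would apply the no-worse-than-truthful hypothesis $u_i(b_i, b_{-i}) \geq u_i(v_i, b_{-i})$ to replace the baseline: since subtracting a smaller number gives a larger difference,
\begin{equation*}
u_i(b^*, b_{-i}) - u_i(b_i, b_{-i}) \;\leq\; u_i(b^*, b_{-i}) - u_i(v_i, b_{-i}).
\end{equation*}
The right-hand side is exactly $r_i(b^*, b_{-i}, v_i)$, which is one of the quantities over which $R_i(v_i, b_{-i}) = \max_{b'_i} r_i(b'_i, b_{-i}, v_i)$ takes its maximum. Hence it is at most $R_i(v_i, b_{-i})$, giving the desired inequality.

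There really is no hard part here; the claim is purely a bookkeeping lemma that lets the rest of Section~\ref{s:swl} reduce bounds involving $R_i(b_i, b_{-i})$ (which appear in the displayed chain above) to bounds involving $R_i(v_i, b_{-i}) = \ICRegret_i(v_i, b_{-i})$, which in turn is controlled by $\ICEnvy_i(v_i, b_{-i})$ via the theorems of Sections~\ref{s:truthful} and~\ref{s:egsp}. The only thing I would be careful about is the slightly ambiguous notation in the definition of $r_i$: the third argument is the \emph{baseline} being subtracted, while the first argument is the \emph{alternative} being considered, and the outer $\max$ over $b_i$ in equation~(5) is really folded into the definition of $R_i$, not into $r_i$ itself.
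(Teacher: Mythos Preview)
Your proposal is correct and follows essentially the same approach as the paper's own proof: both arguments unroll the definition of $R_i$ and use the standing hypothesis $u_i(b_i,b_{-i})\ge u_i(v_i,b_{-i})$ to replace the baseline in the regret difference, then observe that the resulting term is one candidate in the $\max$ defining $R_i(v_i,b_{-i})$. The only cosmetic difference is that the paper starts from $R_i(v_i,b_{-i})$ and bounds it below, whereas you start from $R_i(b_i,b_{-i})$ and bound it above; also note that strictly speaking your ``right-hand side is exactly $r_i(b^*,b_{-i},v_i)$'' should read ``is at most $r_i(b^*,b_{-i},v_i)$'' because of the $\max\{0,\cdot\}$ in the definition of $r_i$, but this only strengthens the inequality you need.
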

%
%
%

We therefore conclude with the following:
\begin{theorem}  
\label{thm:swl}
If $SW^{OPT}(\bm{v}) \geq 8  SW(\bm{b})$ then $\sum_{i}E_i(v_i, \bm{b}_{-i}) \geq \frac{1}{4} SWL(\bm{b})$.
\end{theorem}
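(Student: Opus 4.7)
The plan is to observe that essentially all the heavy lifting has already been done in the derivation preceding the theorem statement; what remains is to convert bids into truthful bids on both the regret side and via the Sections~\ref{s:truthful}/\ref{s:egsp} inequalities, then to use the hypothesis $SW^{OPT}(\bm{v}) \ge 8\, SW(\bm{b})$ to absorb the negative $-2\,SW(\bm{b})$ term.

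First I would restate the intermediate bound
$$\sum_i R_i(b_i,\bm{b}_{-i}) \ \ge\ \tfrac{1}{2}\, SW^{OPT}(\bm{v}) - 2\, SW(\bm{b})$$
that was established via semi-smoothness and Claim~\ref{cl:1}. Next I would apply Claim~\ref{claim:regretbound} pointwise, which yields $R_i(v_i,\bm{b}_{-i}) \ge R_i(b_i,\bm{b}_{-i})$ and therefore
$$\sum_i R_i(v_i,\bm{b}_{-i}) \ \ge\ \tfrac{1}{2}\, SW^{OPT}(\bm{v}) - 2\, SW(\bm{b}).$$
The quantity $R_i(v_i,\bm{b}_{-i})$ is exactly $\ICRegret_i(v_i,\bm{b}_{-i})$ by definition, and the main theorems of Sections~\ref{s:truthful} and~\ref{s:egsp} give $E_i(v_i,\bm{b}_{-i}) = \ICEnvy_i(v_i,\bm{b}_{-i}) \ge \ICRegret_i(v_i,\bm{b}_{-i})$ for all the pricing rules under consideration. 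Summing and chaining produces
$$\sum_i E_i(v_i,\bm{b}_{-i}) \ \ge\ \tfrac{1}{2}\, SW^{OPT}(\bm{v}) - 2\, SW(\bm{b}).$$

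Finally, I would use the hypothesis $SW^{OPT}(\bm{v}) \ge 8\, SW(\bm{b})$, equivalently $2\, SW(\bm{b}) \le \tfrac{1}{4}\, SW^{OPT}(\bm{v})$, to deduce
$$\sum_i E_i(v_i,\bm{b}_{-i}) \ \ge\ \tfrac{1}{2}\, SW^{OPT}(\bm{v}) - \tfrac{1}{4}\, SW^{OPT}(\bm{v}) \ =\ \tfrac{1}{4}\, SW^{OPT}(\bm{v}),$$
and then observe the trivial bound $SWL(\bm{b}) = SW^{OPT}(\bm{v}) - SW(\bm{b}) \le SW^{OPT}(\bm{v})$ to conclude $\sum_i E_i(v_i,\bm{b}_{-i}) \ge \tfrac{1}{4}\, SWL(\bm{b})$.

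There is no real obstacle: the semi-smoothness-style calculation is already done, Claim~\ref{claim:regretbound} bridges the regret at $b_i$ and at $v_i$, and the envy-vs-regret inequalities from the previous two sections are applied off the shelf. The only mild subtlety is the direction of the last step: the hypothesis $SW^{OPT} \ge 8\, SW(\bm{b})$ is precisely the threshold at which $-2\,SW(\bm{b})$ can be absorbed into half of $\tfrac{1}{2} SW^{OPT}$, leaving a clean $\tfrac{1}{4} SW^{OPT}$ that dominates $\tfrac{1}{4} SWL$.
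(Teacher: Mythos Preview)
Your proposal is correct and follows essentially the same approach as the paper: chain $E_i(v_i,\bm{b}_{-i}) \ge R_i(v_i,\bm{b}_{-i}) \ge R_i(b_i,\bm{b}_{-i})$ via the Section~\ref{s:truthful}/\ref{s:egsp} results and Claim~\ref{claim:regretbound}, invoke the semi-smoothness bound $\sum_i R_i(b_i,\bm{b}_{-i}) \ge \tfrac12 SW^{OPT} - 2\,SW(\bm b)$, then use the hypothesis to get $\tfrac14 SW^{OPT} \ge \tfrac14 SWL(\bm b)$. The paper's write-up is terser but the logical structure is identical.
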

\section{Using Envy as a Feature}
\label{s:feature}
In Sections~\ref{s:truthful}~and~\ref{s:egsp} we showed that for reasonably large classes of mechanisms, \ICRegret{} can be expressed in terms of \ICEnvy{} and the two quantities are equal when all bids are different. In this section we move beyond linear relationships between envy and regret, and show that using envy as a feature can lead to better ML algorithms. In particular, we'll show that we can predict regret with reasonable accuracy in auction environments that are far more general than the ones discussed on Section~\ref{s:truthful}.

\subsection{Sanity Check}
Before we focus on using \ICEnvy to predict \ICRegret, Figure~\ref{f:sim} shows a sanity check to see if \ICEnvy is a proxy for \ICRegret for auction mechanisms that aren't explicitly covered by Theorem~\ref{thm:iff}. In Figure~\ref{f:sim} we plot \ICEnvy against \ICRegret for approximately 1M auction bids from a major online publisher (collected on February 20, 2019) in approximately 10K auctions. For each set of bids, we simulate a GFP auction using 10 slots with geometric decaying discount curve. We use GFP because it can be expressed as a Regular Mechanism (according to Def~\ref{def:regular}) but it does not satisfy the payment condition in Theorem~\ref{thm:iff}. While \ICEnvy does not equal \ICRegret, it is an upper bound for it, and the bound is reasonably tight.

\begin{figure}
	\centering
	\includegraphics[width=.8\textwidth]{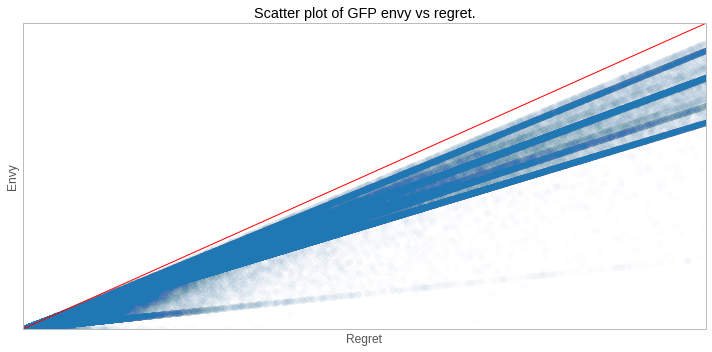}
	\caption{\ICEnvy{} plotted against \ICRegret}
	\label{f:sim}
\end{figure}

\subsection{Using Envy to Predict Regret}
We now go beyond the linear relationship of envy and regret and will use the former to predict the latter.
\subsubsection{Experimental Setup}
{\it Auction Environment.} We look at auctions with $5$ slots, where different bidders have different monotonically decreasing discount curves over the slots (cf. the Ad Types model in Section~\ref{s:egsp}). In this setting, not all ads can target all slots (as a consequence of the different discount curves, not by assumption), and the greedy allocation algorithm is no longer optimal. The auction mechanism that we consider use the greedy allocation (for each slot from highest to lowest, assign the slot to the unassigned ad with the highest discounted value). Using a greedy algorithm instead of the max-weight bipartite matching algorithm means that the theoretical results from Section~\ref{s:egsp} do not apply here. The goal is to show that \ICEnvy{} is useful even outside the setting covered by the theory. We consider 2 pricing rules:
\begin{itemize}
	\item Generalized Second Price (GSP). The discounted value of the next highest bidder, i.e. during the greedy algorithm, the next-highest value ad.
	\item Externality pricing.\footnote{If the allocation algorithm optimized social welfare, then externality pricing would be VCG pricing, and the resulting auction would have $0$ envy and $0$ regret. Since greedy isn't optimal, generally both envy and regret are positive.} The social welfare loss of other buyers due to the presence of buyer $i$.
\end{itemize}

{\it Datasets.} We generate the datasets by drawing bids from a lognormal distribution\footnote{Real-world bids in online auctions typically follow a log-normal distribution, see e.g. \cite{OS11}.} and using 3 classes of bidders with geometric discount curves with parameters $\alpha_1 = 0.9, \alpha_2 = 0.7, \alpha_3 = 0.5$. For each bidder in an auction, a datapoint corresponds to the envy profile (meaning for each of the 5 slots, the unclamped, possibly negative, envy) and the label is the regret.

{\it Baseline.} We compare the performance of the ML models trained on envy, with models that were trained using the (value, price) profile (meaning for each slot, what is the discounted value, and what is the current slot price).

{\it Implementation.} We use scikit-learn \cite{scikit-learn} to train the different models. In particular we use support vector regression (SVR) with the RBF kernel; gradient-boosted regression trees (GBRT) with least-squares loss function, learning rate of $0.1$, and $100$ trees; and neural nets (NN) with 2 hidden layers (of 100, and 20 nodes each) and Adam solver \cite{KB14}.

\subsubsection{Results}
Figure~\ref{fig:mse} shows the  training and cross-validation mean-squared error (MSE) as a function of the number of training samples for the GBDT. The MSE quickly decreases to about $0.02$ after 30K iterations and remains relatively stable after that.


\begin{figure}
\centering
	\includegraphics[width=.6\textwidth]{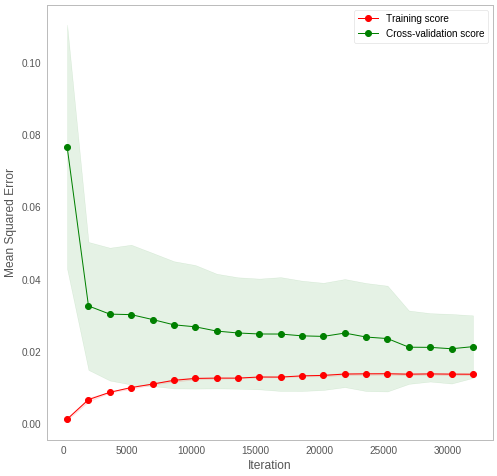}
	\caption{The training and validation MSE of the GBDT model on GSP data as a function of the number data points used to train the model.}
	\label{fig:mse}

\end{figure}

So using envy, we can construct a model that accurately predicts regret. To show that envy uniquely does this compared to reasonable benchmarks, we compare it against models that were trained using price and discounted value for each slot as features; the results are in Table~\ref{t:comparison}. The models here are trained using 100K datapoints, the point being not to train as accurate as possible of a model, but rather to compare the performance of models trained on different features given an equal amount of data. Across all 3 models, the regressor trained on the envy feature alone does better than one that is trained on both the values and prices for slots. This remains qualitatively true for smaller training data sets as well. None of the regressors in the table are necessarily great, but the goal here is not to tweak a regressor to perform well; rather it is to show that using envy as a feature gives better results across a wide variety of models without tuning the model for this particular case.

\begin{table}
	\centering
	\begin{tabular}{|c|c|c|c|}
		\hline 
		& \footnotesize{$R^2$ Price and Value} & \footnotesize{$R^2$ Envy} \\ 
		\hline 
		SVR & $55.9\%$ & $78.4\%$  \\ 
		\hline 
		GBRT &$46.8\%$  & $84.4\%$ \\ 
		\hline 
		NN &  $77.1\%$& $86.2\%$ \\ 
		\hline 
	\end{tabular} 
	\caption{Comparing using price and value as features vs. using envy as features across a range of models trained on 100K datapoints.}
	\label{t:comparison}
\end{table}

\section{Conclusions}
In this paper we proposed to use \ICEnvy{} to give insight in an ad auction in four ways. First, we defined a class of auction mechanisms for position auctions---which includes VCG, GSP, and GFP---where \ICEnvy{} and \ICRegret{} are tightly related. For this class we gave necessary and sufficient conditions for \ICEnvy{} to  upperbound  \ICRegret{} and mild supplementary conditions under which they are exactly equal. 
Secondly, we consider the Ad Types setting, with multiple discount curves, and show that a suitable generalization of GSP (as well as VCG) continue to have $\ICEnvy \ge \ICRegret$.
Thirdly, we upperbounded the social welfare loss in terms of \ICEnvy for the same sets of mechanisms. We show that the social welfare loss is \emph{at most} $4 \cdot \ICEnvy$ (under a technical condition we introduce in the section). 
Finally, we used bidding data from a major online publisher to show that \ICEnvy can be used as a feature to learn an estimator for \ICRegret. The estimator has low MSE, and performs better than comparable estimators that are trained using other features from the auction like values and prices for different slots. 
For future work, we plan to extend our  study of the relationship between \ICEnvy and \ICRegret to the case of bidders with different ad types.  Most importantly, we plan to investigate the existence of a mechanism with monotonic anonymous prices such that $\ICEnvy \geq \ICRegret$. On the more practical side,   we  plan to use  \ICEnvy as a feature to learn an estimator of the social welfare of the auction.


\bibliography{references}
\bibliographystyle{plain}

\appendix
\section{Handling Unbalanced Graphs and Ties}
\label{ss:ties}

We conclude the section by showing how to handle unbalanced graphs and ties. If the number of advertisers $n>m$ the number of ties, we add $n-m$ slots, and for all slots $j\in J$ let $\alpha_{i,j} = 0$ for each new slot $j$. Note that we again have a complete bipartite graph, and that the value of the max-weight bipartite matching hasn't changed. If the number of slots $m > n$, we can remove the lowest $m-n$ slots. Due to monotonicity of the discount curves, the lowest $m-n$ slots go unassigned in the max-weight matching, hence they can be safely removed.

Finally, we handle ties in the MWPM (including any ties we introduced by adding new slots).   We do that by 
perturbing by suitable small values the quality values $\alpha_{i,j}$ and then computing the unique MWPM on the perturbed values. 
The returned MWPM is also maximum on the original values. The computed payments differ only by small values from the payment 
computed on the original quality values. However, we obtain the exact same payments on the original quality values by rounding to the closest multiple of the minimum difference between two  $v_i \alpha_{i,j}$ values.  

More formally, let $\delta$  be the minimum difference between two  quality values $\alpha_{i,j}$. Assume the minimum valuation $v_i$ of 
a bidder to be equal to 1. Let us define  $\epsilon = \delta/2^{m^2 +3}$ for a small constant $c$.  We order the quality values $\alpha_{i,j}$ of the slots in any order and we increase the $k$-th quality value in the order by  $\epsilon 2^k$.   This  ensures that any two subsets of edges have different total value and therefore the MWPM is unique.  The payments of the slots are computed as in Definition \ref{def:gsp-prices}.
The VCG price $p_j$ is obtained by subtracting the values of two sets of  edges and therefore the absolute difference of the VCG price for a slot computed on the perturbed values and the VCG price computed on the original values is less than $\epsilon \times 2^{m^2+1}$.  By summing up the perturbed value $t_{i,j}$ of an edge, we increase the difference by at most the largest perturbation. In total, we have a difference that is less than $\epsilon \times 2^{m^2+2}< \delta/2$.  Therefore, we recover the payments computed on the original quality values by rounding to the closest multiple of $\delta$. 

\section{Proofs Section \ref{s:egsp}}

\subsection{Proof Claim \ref{claim:low_tight}}

To prove this claim, we first rename the bidders from $1$ to $m$ following the order of the items matched to the bidders. Bidder $i$ is therefore matched to item $i$. Denote by $X_{\geq i}$ the restriction of a set/vector $I$, $J$, $\bf p$ and $\bf q$ to elements $[i+1, \ldots, m]$.  For the proof of the claim we crucially use the following fact. 

\begin{observation}
\label{fact:tight_path}
The alternating path $P_i \subseteq E^=$  that connects each bidder $i$ to the item $m$ with price $p_m=0$  only traverses items from $i$ to $m$.  
\end{observation}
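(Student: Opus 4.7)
The plan is to combine Claim~\ref{claim:low_tight} with the bidder-to-item renaming (bidder $k$ matched to item $k$) to establish a monotonicity property: every tight edge incident to bidder $k$ goes to an item with index at least $k$. Once this invariant is in hand, the observation follows by a short induction along the alternating path $P_i$.

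First I would unpack Claim~\ref{claim:low_tight} in the renamed setting. Since bidder $k$ is matched to item $k$, and item $k$ is the lowest-indexed slot for which $(k,j)\in E^=$, bidder $k$ cannot have any tight edge to a slot indexed strictly less than $k$. Equivalently, for every $j$ with $(k,j)\in E^=$ we must have $j\geq k$. This is the only consequence of Claim~\ref{claim:low_tight} that the argument needs.

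Next I would trace $P_i$, which alternates unmatched tight edges (bidder to item) with matched edges (item back to its matched bidder, which under the renaming has the same index). Starting at bidder $i$, the first unmatched tight edge must land on some item $j_1\geq i$ by the invariant, and since item $i$ is already incident to the matched edge, in fact $j_1>i$. The matched edge then leads to bidder $j_1$, from which the same argument yields an item $j_2>j_1$, and so on. Because $P_i$ is a simple alternating path, the visited items form a strictly increasing chain $i\leq j_1<j_2<\cdots<m$, so every item on $P_i$ lies in $\{i,i+1,\ldots,m\}$, which is exactly the claim.

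The one subtlety I expect is pinning down the reading of ``lowest slot'' in Claim~\ref{claim:low_tight}: it has to mean the slot of smallest \emph{index} (i.e., highest quality and highest price among tight edges), which is the only reading consistent both with the renaming convention and with the alternating-path-to-a-free-item structure of the minimal-dual-price Hungarian variant cited from \cite{DBLP:journals/corr/KernMU16}. Once this is fixed, no case analysis is needed and the induction above completes the proof.
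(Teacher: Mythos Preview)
Your proposal has a genuine circularity problem. In the paper, Observation~\ref{fact:tight_path} is established \emph{first} (inside the appendix section that proves Claim~\ref{claim:low_tight}) and is then used as the key ingredient in the inductive proof of Claim~\ref{claim:low_tight}. Invoking Claim~\ref{claim:low_tight} to prove the observation therefore begs the question.

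There is also a misreading of the claim itself. In this paper ``lowest slot'' means lowest \emph{quality}, i.e.\ \emph{highest} index: the proof of Claim~\ref{claim:low_tight} states the inductive hypothesis as ``bidder $k$ is not tight with any item $>k$'' and speaks of ``a lower slot $k\geq i$''. Under the correct reading, every tight edge incident to bidder $k$ goes to an item with index $\leq k$, the opposite of the invariant you derive. Consequently the unmatched tight edge leaving bidder $i$ cannot land on an item $j_1>i$; it would land on some $j_1<i$, and your induction along $P_i$ collapses. (An argument in this spirit can be made to work once the claim is already available, but the path has to be traced the other way---start with the matched edge $(i,i)$ and use that any bidder tight with item $j$ has index $\geq j$---and it is still circular in the paper's logical order.)

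The paper's own proof avoids both issues by a restriction argument: the matching restricted to bidders $I_{\geq i}$ and items $J_{\geq i}$ is still a max-weight perfect matching, and the restricted dual $(\mathbf p_{\geq i},\mathbf q_{\geq i})$ is still optimal. Hence the Hungarian-method property that every bidder is connected by a tight alternating path to a price-$0$ item holds inside the restricted instance, which forces $P_i$ to stay within items $\{i,\ldots,m\}$. This gives the observation directly, with no appeal to Claim~\ref{claim:low_tight}.
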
 
\begin{proof}
We observe that vectors $\bf p_{\geq i}$, $\bf q_{\geq i}$ form an optimal dual solution for the the problem restricted to bidders $I_{\geq i}$ and items  $J_{\geq i}$ given that the optimal primal solution is formed by a subset of the MWPM computed on $I$ and $J$.  The optimality of the solution when restricted to bidders and items $\geq i$ yields the following fact: 
\end{proof}
We can now prove the claim.
\begin{proof}[Proof of Claim~\ref{claim:low_tight}]
For the base of the induction, the claim is clearly true for bidder $m$.  Assume it is true for all bidders $I_{\geq i+1}$, we prove it holds for bidder $i$.  Observe that, by the inductive hypothesis, bidder $k$ is not tight with any item $>k$. 
Path $P_k$, $k>i$, is formed by the edge $(k,k)$ and then by a path from item $k$ to item $m$ that traverses only bidders $I_{>k}$ and items $J_{>k}$. 
For the inductive hypothesis, by contradiction, assume bidder $i$  is matched with a slot $i$ such that there exists a lower slot $k\geq i$ such that $(i,k)\in E^=$. That must mean  that there exists two distinct alternating paths $P_i \subseteq E^=$ and $P'_i\subseteq E^=$ from bidder $i$ to item $m$.  $P_i$ is the path whose existence is guaranteed by the execution of the Hungarian algorithm. $P_i'$ is formed by the the edge $(i,k)$ followed by the path that connects item $k$ to item $m$. Paths $P_i$ and $P'_{i}$ must form a cycle given that $P_i$ only traverses items $\geq i$ and $P_i'$ only traverses items of $\geq k$, i.e.,  path $P_i$ is not a subset of path $P_k$. 
Given a cycle of tight edges,  we can include in the  MWPM either the odd or the even edges of the cycles while the dual variables $\bf p$ and $\bf q$ are unchanged.  This contradicts the uniqueness of the MWPM and therefore we have a contradiction.
\end{proof}

\subsection{Proof Claim \ref{claim:price_increase}}

\begin{proof}
Given that the quality values $\alpha_{i,j}$ are non increasing, we can move from bids  $b_{i,1}, \ldots, b_{i,m}$ to bids $\tilde b_{i,1}, \ldots, \tilde b_{i,m}$ by considering a sequence of small $\epsilon$ increases of the bids.  We divide the process in two phases. We first bring the values $b_{i,1}, \ldots, b_{i,\tilde j}$ to the final values $\tilde b_{i,1}, \ldots, \tilde b_{i,\tilde j}$, and later, in a second phase, we bring the values  $b_{i,\tilde j +1}, \ldots, b_{i,m}$ to the final values $\tilde b_{i,\tilde j +1}, \ldots, \tilde b_{i,m}$.

For the first phase,  we increase by $\epsilon$  the first  $k$  values  $b_{i,1}, \ldots, b_{i,k}$ for increasing values of $k\leq \tilde j$.  
We prove the claim by showing that the set $E^=_{\tilde j}$ of tight edges does not contract after an $\epsilon$ increase 
on the value of bidder $i$ on the first $k$ slots.  If bidder $i$ is not tight  with any of the  slots in $\{1,\ldots, k\}$, the proof follows immediately from the solution provided by the Hungarian method since the increase of the bids by $\epsilon$ does not violate any constraint $q_i + p_l \geq b_{i,l}$, $l=1, \ldots, k$, if $\epsilon$ is small enough.  If bidder $i$ is tight with at least one of the first $k$ slots, 
we  increase by $\epsilon$ the price of all slots $[1,\ldots, k]$ while the utility $q_i$ is not modified. For any slot $l\in [1,k]$ tight with $i$, it still holds $q_i + p_l = b_{i,l}$.
For all other bidders, we  reduce by $\epsilon$ the utility $q_{i'}$ of  each  bidder $i'$ that is not tight with any slot in $[k+1,\ldots, m]$.  
We observe that all the tight edges for these bidders are still tight. We do not reduce the utility of all bidders, $i$ included,  
that are tight at least with a slot in $[k+1,\ldots, m]$ and therefore the set of tight edges in $E^=_{\tilde j}$ is not contracted. We have also proved that the value of $p_{\tilde j}$ cannot decrease  for each $\epsilon$ increase in the values of the bids. 

Let us now consider the second phase in which we increase by $\epsilon$  for   values  $b_{i,\tilde j+1}, \ldots, b_{i,\tilde j+k}$ for increasing values of $k$ till we reach the final values of the bids.  We know by Claim \ref{claim:low_tight} that bidder $i$ is not tight with any slot $k>\tilde j$ for the final values $\tilde b_{i,1}, \ldots, \tilde b_{i,m}$.   The increase by $\epsilon$ of the   $k$  values  $b_{i,\tilde j+1}, \ldots, b_{i,\tilde j+k}$ will therefore  not make any additional edge tight, and the set $E^=_{\tilde j}$ and the VCG clearing price $p_j$ will stay  unchanged.

%
\end{proof}

\subsection{Proof Claim \ref{claim:price_decrease}}

\begin{proof}
The proof is symmetric to the one of Claim \ref{claim:price_increase}. 
We first bring the values  $b_{i,\tilde j +1}, \ldots, b_{i,m}$ to the final values $\tilde b_{i,\tilde j +1}, \ldots, \tilde b_{i,m}$.   By  Claim \ref{claim:low_tight}, bidder $i$ is not tight with any of the slots in $[\tilde j+1, \ldots, m]$ and therefore the decrease of the values of $\tilde b_{i,\tilde j +1}, \ldots, \tilde b_{i,m}$ will not affect the values of $E^=_{\tilde j}$ and $p_{\tilde j}$. 
In the second phase, we bring the values $b_{i,1}, \ldots, b_{i,\tilde j}$ to the final values $\tilde b_{i,1}, \ldots, \tilde b_{i,\tilde j}$. This second operation will preserve the set $E_{\tilde j}$ of tight edges for all bidders that are tight with some slots $k>\tilde j$ and will not decrease the GSP clearing price $p_{\tilde j}$. 
\end{proof}

\section{Proofs Section \ref{s:swl}}

\subsection{Proof Claim \ref{cl:1}}

\begin{proof}
	We consider two cases for the proof:
	
	\begin{enumerate}
		\item  In a first case, by switching his bid from $b_i$ to $v_i/2$, player $i$ wins
		some slot $k\leq X(\bm{v},i)$. In this case $u_i(v_i/2, \bm{b}_{-i})  \geq 1/2 \alpha_{i,X(\bm{v},i)} v_i$.
		\item  Otherwise, by switching his bid from $b_i$ to $b_i'=v_i/2$, player $i$ wins
		some slot $k\leq X(\bm{v},i)$.  Given that the mechanism computes a MWPM, the player who wins slot $j=X(\bm{v},i)$ under bidding
		profile $(v_i/2,\bm{b}_{-i})$ is such that assigning slot $j$ to bidder $\pi((v_i/2,\bm{b}_{-i}),j)$ and slot $X((v_i/2, \bm{b}_{-i}),i)$ to bidder $i$ 
		gives a reward that is strictly larger than assigning  slot $X(\bm{v},i)$ to bidder $i$ with bid  $b_i'=v_i/2$. 
		In this second case we have 
		
		\begin{eqnarray*} 
		u_i(v_i/2, \bm{b}_{-i}) + \alpha_{\pi((v_i/2,\bm{b}_{-i}),j),j} v_{\pi((v_i/2,\bm{b}_{-i}),j)} &\geq& v_i/2 \times \alpha_{i, X((v_i/2, \bm{b}_{-i}),i)} \\ 
		&&+ \alpha_{\pi((v_i/2,\bm{b}_{-i}),j),j} v_{\pi((v_i/2,\bm{b}_{-i}),j} \\
		&\geq& 1/2 \alpha_{i,X(\bm{v},i)} v_i.\
		\end{eqnarray*}
		
	\end{enumerate}
	
\end{proof}

\subsection{Proof Claim \ref{claim:regretbound}}

\begin{proof}
	By definition, 
	
	\begin{eqnarray}
	R_i(v_i, \bm{b}_{-i}) &=& \max_{b'_i \in \RR_{\geq 0}} \{r_i(b'_i,\bm{b}_{-i}, v_i)  \} \nonumber\\
	&=&\max_{b'_i \in \RR_{\geq 0}} \max \{0, u_i(\mathcal{A}(b'_i,\bm{b}_{-i}),\mathcal{P}(b'_i,\bm{b}_{-i}) ) \nonumber\\
	&\qquad& - u_i(\mathcal{A}(v_i,\bm{b}_{-i}),\mathcal{P}(v_i,\bm{b}_{-i}) ) \} \nonumber\\
	&=& \max_{b'_i \in \RR_{\geq 0}} \max \{0, u_i(\mathcal{A}(b'_i,\bm{b}_{-i}),\mathcal{P}(b'_i,\bm{b}_{-i}) ) \} \nonumber\\
	&\qquad& -  u_i(\mathcal{A}(v_i,\bm{b}_{-i}),\mathcal{P}(v_i,\bm{b}_{-i}) ) \nonumber\\
	&\geq&  \max_{b'_i \in \RR_{\geq 0}} \max \{0, u_i(\mathcal{A}(b_i',\bm{b}_{-i}),\mathcal{P}(b_i',\bm{b}_{-i}) ) \}\nonumber\\
	&\qquad& - u_i(\mathcal{A}(b_i,\bm{b}_{-i}),\mathcal{P}(b_i,\bm{b}_{-i}) ) \nonumber\\
	&=& R_i(b_i, \bm{b}_{-i})
	\end{eqnarray}
	
	where the last inequality follows from 
	$u_i(\mathcal{A}(b_i,\bm{b}_{-i}),\mathcal{P}(b_i,\bm{b}_{-i}) ) \geq u_i(\mathcal{A}(v_i,\bm{b}_{-i}),\mathcal{P}(v_i,\bm{b}_{-i}) )$ that stems for the fact that bid $b_i$ has a regret smaller that truthful strategy $v_i$. 
	
\end{proof}

\subsection{Proof Theorem \ref{thm:swl}}

\begin{proof}
	The proof follows since by Theorem \ref{thm:iff} ~  and ~ Claim \ref{claim:regretbound}, we obtain $E_i(v_i, \bm{b}_{-i}) \geq R_i(v_i, \bm{b}_{-i}) \geq R_i(b_i, \bm{b}_{-i})$.  Therefore: 
	
	\begin{align*}
	\sum_i E_i(v_i, \bm{b}_{-i}) &\geq \sum_i R_i(b_i, \bm{b}_{-i})  \\
	 &\geq \frac12 SW^{OPT} (\bm{v}) - 2 SW(\bm{b})\\
	 &\geq  \frac14 SW^{OPT} (\bm{v}) \\
	&\geq  \frac14 SWL(\bm{b}),
	\end{align*}
	
with the second to last inequality following from $SW^{OPT}(\bm{v}) \geq 8  SW(\bm{b}).$
\end{proof}

\end{document}